\newtheorem{theorem}{Theorem}[section]
\newtheorem{lemma}[theorem]{Lemma}
\newtheorem{corollary}[theorem]{Corollary}
\newtheorem{definition}[theorem]{Definition}
\newtheorem{proposition}[theorem]{Proposition}
\newtheorem{remark}[theorem]{Remark}
\DeclareMathOperator{\C}{\mathcal{C}}
\newcommand{\fqn}{\mathbb{F}_{q^n}}
\newcommand{\fq}{\mathbb{F}_{q}}
\newcommand{\fqk}{\mathbb{F}_{q^k}}
\newcommand{\fqt}{\mathbb{F}_{q^t}}
\newcommand{\F}{{\mathbb F}}
\newcommand{\lmb}{\lambda}
\newcommand{\la}{\langle}
\newcommand{\ra}{\rangle}
\newcommand{\N}{\mathrm{N}}
\begin{document}
\title{On one-orbit cyclic subspace codes of $\mathcal{G}_q(n,3)$} 


\author{%
  \IEEEauthorblockN{Chiara Castello, Olga Polverino and Ferdinando Zullo}
  \IEEEauthorblockA{Universit\`a degli Studi della Campania ``Luigi Vanvitelli''\\ 
                    Viale Lincoln, 5, I--\,81100 Caserta, Italy\\
                    Email: \{chiara.castello,olga.polverino, ferdinando.zullo\}@unicampania.it}
}


\maketitle


\begin{abstract}
   Subspace codes have recently been  used for error correction in random network coding. In this work, we focus on one-orbit cyclic subspace codes. If $S$ is an $\fq$-subspace of $\fqn$, then the one-orbit cyclic subspace code defined by $S$ is
   \[ \mathrm{Orb}(S)=\{\alpha S \colon \alpha \in \fqn^*\}, \]
   where $\alpha S=\lbrace \alpha s \colon s\in S\rbrace$ for any $\alpha\in \fqn^*$.\\
   Few classification results of subspace codes are known, therefore it is quite natural to initiate a classification of cyclic subspace codes, especially in the light of the recent classification of the isometries for cyclic subspace codes. We consider three-dimensional one-orbit cyclic subspace codes, which are divided into three families: the first one containing only $\mathrm{Orb}(\mathbb{F}_{q^3})$; the second one containing the optimum-distance codes; and the third one whose elements are codes with minimum distance $2$. We study inequivalent codes in the latter two families. 
\end{abstract}

\section{Introduction}

Let $k$ be a non-negative integer with $k \leq n$. The set of all $k$-dimensional $\F_q$-subspaces of $\F_{q^n}$, viewed as an $\F_{q}$-vector space, forms a \textbf{Grassmannian space} over $\F_q$, which is denoted by $\mathcal{G}_{q}(n,k)$. A \textbf{constant dimension subspace code} is a subset $C$ of $\mathcal{G}_{q}(n,k)$ endowed with the metric defined as follows \[d(U,V)=\dim_{\F_q}(U)+\dim_{\F_q}(V)-2\dim_{\F_q}(U \cap V),\]
where $U,V \in C$. This metric is also known as \textbf{subspace metric}.
As usual, we define the \textbf{minimum distance} of $C$ as
\[ d(C)=\min\{ d(U,V) \colon U,V \in C, U\ne V \}. \]
Subspace codes have been recently used for the error correction in random
network coding, see \cite{KoetterK}. 
The first class of subspace codes studied was the one introduced in \cite{Etzion}, known as \textbf{cyclic subspace codes}.
A subspace code $C \subseteq \mathcal{G}_q(n,k)$ is said to be \textbf{cyclic} if $\alpha V\in\C$ for every $\alpha \in \F_{q^n}^*$ and every $V \in C$.
If $C$ coincides with $\mathrm{Orb}(S)$, for some subspace $S$ of $\fqn$, we say that $C$ is a \textbf{one-orbit} cyclic subspace code and $S$ is said to be a \textbf{representative} of the orbit. \\
Let $S$ be an $\fq$-subspace of $\fqn$ of dimension $k$ and let $d=\gcd(n,k)$. Then $|\mathrm{Orb}(S)|=\frac{q^n-1}{q^d-1}$ if and only if $\mathbb{F}_{q^d}$ is the maximum subfield of $\fqn$ of linearity of $S$; see \cite[Theorem 1]{Otal}.
Therefore, every orbit of a subspace $V \in \mathcal{G}_q(n,k)$ defines a cyclic subspace code of size $(q^n-1)/(q^t-1)$, for some $t \mid n $. Let $S$ be a strictly $\fq$-linear subspace of dimension $k$, i.e. $\fq$ is the maximum field on which $S$ is linear. Then the cyclic subspace code defined by $S$ has size $(q^n-1)/(q-1)$. Also, in this case, the maximum value for the minimum distance is at most $2k$ and it is exactly $2k$ if and only if the orbit of $S$ is a $k$-spread of $\F_{q^n}$, i.e. $\mathrm{Orb}(\F_{q^k})$ and $k=1$ because of the linearity assumption. 

In \cite{Trautmann} the authors conjectured the existence of  a cyclic code of size $\frac{q^n-1}{q-1}$ in $\mathcal{G}_q(n,k)$ and minimum distance $2k-2$ for every pair of positive integers $n,k$ such that $1<k\leq n/2$. These codes are also known as \textbf{optimum-distance codes}.

In \cite{BEGR} the authors used subspace polynomials to generate cyclic subspace codes with size $\frac{q^n-1}{q-1}$ and minimum distance $2k-2$, proving that the conjecture is true for any given $k$ and infinitely many values of $n$. This was improved in \cite{Otal}. 
Finally, the conjecture was solved in \cite{Roth} for most of the cases, by making use of Sidon spaces originally introduced in \cite{BSZ2015}, in relation with the linear analogue of Vosper's Theorem (see also \cite{BSZ2018,HouLeungXiang2002}). An $\fq$-subspace $S$ of $\fqn$ is called a \textbf{Sidon space} if $S$ satisfies the following property: for all nonzero $a,b,c,d \in S$ such that $ab=cd$ then 
$\{a \F_q,b \F_q\}=\{c \F_q,d \F_q\}$,
where $e\fq =\{e\lambda \colon \lambda \in \fq\}$.
So, the study of cyclic subspace codes with size $\frac{q^n-1}{q-1}$ and minimum distance $2k-2$ is equivalent to the study of Sidon spaces. Indeed, the following theorem explains the condition $k\leq n/2$ in the aforementioned conjecture. To this aim we need the following notation: if $S$ is an $\fq$-subspace of $\fqn$ then
$ S^2=\langle ab \colon a,b \in S\rangle_{\fq}$, 
i.e. $S^2$ is the $\fq$-subspace spanned by the products of pairs of elements of $S$.
\begin{theorem}\cite[Theorem 18]{BSZ2015} \label{lowerboundSidon}
Let $S\in\mathcal{G}_q(n,k)$ be a Sidon space of dimension $k\geqslant 3$, then
 $
\dim_{\fq}(S^2)\geqslant 2k$.
\end{theorem}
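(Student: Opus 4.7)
The plan is to prove $\dim_{\fq}(S^2) \geq 2k$ in two stages: first I would establish the weaker bound $\dim_{\fq}(S^2) \geq 2k - 1$ using the Sidon condition to control the intersection of two product subspaces of $S^2$, and then I would upgrade this by one dimension via a finer Sidon analysis involving a third basis vector.

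For the first stage, fix any two $\fq$-linearly independent vectors $v_1, v_2 \in S$. Given $x = v_1 a = v_2 b$ with $a, b \in S \setminus \{0\}$, the Sidon property forces $\{v_1\fq, a\fq\} = \{v_2\fq, b\fq\}$; since $v_1\fq \neq v_2\fq$, this requires $a \in \fq v_2$ and $b \in \fq v_1$, and substituting back gives $x \in \fq\, v_1 v_2$. Hence $v_1 S \cap v_2 S = \fq\, v_1 v_2$ and Grassmann's identity yields $\dim_{\fq}(v_1 S + v_2 S) = 2k - 1$.

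For the second stage, using $k \geq 3$ I pick $v_3 \in S$ linearly independent from $v_1, v_2$, set $U = \langle v_1, v_2\rangle_\fq$, and assume toward contradiction that $\dim_{\fq}(S^2) = 2k-1$. Then $v_3^2 \in v_1 S + v_2 S$, so $v_3^2 = v_1 a + v_2 b$ for some $a, b \in S$; separating off the $v_3$-coefficients of $a, b$ rewrites this as
\begin{equation*}
v_3 \cdot c \;=\; v_1 a' + v_2 b',
\end{equation*}
where $c = v_3 - \alpha v_1 - \beta v_2 \notin U$ (hence $c \neq 0$) and $a', b' \in S$ have vanishing $v_3$-coefficient. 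When $k = 3$, $a', b' \in U$ and the right-hand side lies in $U^2$; after the multiplicative normalization $v_1 = 1$ (achieved by replacing $S$ with $v_1^{-1}S$, which preserves both Sidonness and $\dim_{\fq}(S^2)$), the right-hand side becomes $P(v_2)$ for some $P \in \fq[X]$ of degree at most two. If $P$ factors over $\fq$ as $P(X) = c_0(X - r_1)(X - r_2)$ with $c_0, r_i \in \fq$, the two linear factors lie in $U \subseteq S$ and applying Sidon to the equality $v_3 \cdot c = [c_0(v_2 - r_1)] \cdot (v_2 - r_2)$ forces $\{v_3\fq, c\fq\} = \{(v_2-r_1)\fq, (v_2-r_2)\fq\}$, so $v_3$ is an $\fq$-scalar multiple of some $v_2 - r_i \in U$, contradicting $v_3 \notin U$. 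The subcases $\deg P \leq 1$ are handled in the same way using a trivial factorization of $P(v_2)$ in $U$.

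The main obstacle is the irreducible-quadratic subcase in the $k = 3$ analysis: when $P$ is an irreducible quadratic over $\fq$, the element $P(v_2)$ is not expressible as a product of two elements of $U$, so Sidon does not apply directly. To close this case I would exploit the fact that $U$ being Sidon forces $1, v_2, v_2^2$ to be $\fq$-linearly independent, so the minimal polynomial of $v_2$ over $\fq$ has degree at least three, and then pass to the splitting field $\fq^2$ of $P$ for a Galois-theoretic analysis that ultimately contradicts Sidon. For $k > 3$ the identity above places $a', b'$ in $\langle v_1, v_2, v_4, \ldots, v_k\rangle_\fq$ rather than in $U$, so the clean reduction to $U^2$ is lost; the natural remedy is to induct on $k$, combining the base case $k = 3$ with a Sidon-based bound on $\dim_{\fq}(T^2 \cap (vT + \fq\, v^2))$ for an $(k-1)$-dimensional Sidon subspace $T \subseteq S$ and $v \in S \setminus T$ to control how many new dimensions are picked up when extending $T$ to $S$.
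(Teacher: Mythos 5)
The paper does not actually prove this statement: it is quoted from \cite[Theorem 18]{BSZ2015}, so there is no in-paper argument to compare with and your attempt must stand on its own. Its first stage does: for $\fq$-independent $v_1,v_2\in S$ the Sidon property indeed forces $v_1S\cap v_2S=\fq v_1v_2$, so $\dim_{\fq}(S^2)\geq\dim_{\fq}(v_1S+v_2S)=2k-1$, and your reduction of the critical case $\dim_{\fq}(S^2)=2k-1$ to the relation $v_3c=v_1a'+v_2b'$ with $c=v_3-\alpha v_1-\beta v_2\notin\langle v_1,v_2\rangle_{\fq}$, as well as the treatment of the split/degenerate forms of $P$ when $k=3$ after normalizing $v_1=1$, are correct.

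The two places you flag are, however, genuine gaps, and neither is closed. First, the irreducible-quadratic subcase for $k=3$ is exactly where the content of the theorem sits: the relation $v_3^2=\alpha v_3+\beta v_2v_3+P(v_2)$ only says that $v_3$ is quadratic over $\fq(v_2)$, which is not prima facie incompatible with the Sidon property, and the factorization $P(v_2)=a(v_2-r)(v_2-r^q)$ with $r\in\F_{q^2}\setminus\fq$ involves elements of the $\F_{q^2}$-span of $S$, where the Sidon hypothesis gives no information (Sidonness is not preserved under extension of scalars in any obvious sense); your remark that $1,v_2,v_2^2$ are $\fq$-independent is true but does not produce a forbidden product identity, and ``pass to the splitting field for a Galois-theoretic analysis'' is a plan, not an argument. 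Second, the proposed induction for $k>3$ does not reduce the difficulty: writing $S=T\oplus\fq v$ with $T$ Sidon of dimension $k-1$, one has $S^2=T^2+vS$, hence $\dim_{\fq}(S^2)=\dim_{\fq}(T^2)+k-\dim_{\fq}(T^2\cap vS)$, so the bound on $\dim_{\fq}(T^2\cap vS)$ you would need (namely $\leq\dim_{\fq}(T^2)-k$) is literally equivalent to the conclusion $\dim_{\fq}(S^2)\geq 2k$; the Sidon condition controls intersections $xS\cap yS$ of products by single elements, not the intersection of $vS$ with a sum of products such as $T^2$, and you offer no mechanism for the latter. As written, the argument establishes only $\dim_{\fq}(S^2)\geq 2k-1$, together with the $k=3$ case when the associated quadratic $P$ splits over $\fq$.
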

Also, Sidon spaces with the smallest dimension of its square span were used in designing a multivariate public-key cryptosystem (see \cite{raviv2021multivariate}).\\
Apart from some classification results on subspace codes, see for instance \cite{equidcodes,lisa2,lisa}, very few classification results are known for cyclic subspace codes. In light of the classification of the isometries for cyclic subspace codes recently proved by Gluesing-Luerssen and Lehmann in \cite{Heideequiv} (see also \cite{Trautmann2}), it is quite natural to initiate a classification of cyclic subspace codes.

In this paper we will consider $3$-dimensional one-orbit cyclic subspace codes and we will give some classification results.
The possible $3$-dimensional one-orbit cyclic subspace codes $\mathrm{Orb}(S)$ are of the following types:
\begin{itemize}
    \item[i)] $|\mathrm{Orb}(S)|=(q^n-1)/(q^3-1)$ and $d(\mathrm{Orb}(S))=6$;
    \item[ii)] $|\mathrm{Orb}(S)|=(q^n-1)/(q-1)$ and $d(\mathrm{Orb}(S))=4$;
    \item[iii)] $|\mathrm{Orb}(S)|=(q^n-1)/(q-1)$ and $d(\mathrm{Orb}(S))=2$.
\end{itemize}
For Case i), we only have $\mathrm{Orb}(\F_{q^3})$.
The codes of the Family ii) are the optimum-distance codes and are those for which $S$ is a Sidon space; whereas the codes of Family iii) correspond to those subspaces for which there exists at least one $\alpha \in \fqn^*$ such that $\dim_{\fq}(S\cap \alpha S)=2$. 
So, the problem is to determine inequivalent codes in both Family ii) and iii).
To this aim, we first introduce in Section \ref{sec:equivandinv} some new invariants which can be used to distinguish inequivalent classes of codes, based on the square-span of a subspace and the span of a subspace over a larger field. In Section \ref{sec:afirstclass} we give a classification result based on the dimension of the square-span of a representative of the code and we study the equivalence problem for the codes in Family iii). In the last section (cf. Section \ref{sec:optdistcodes}) we investigate the equivalence problem for the codes in Family ii) under the assumption that a representative is contained in the sum of two multiplicative cosets of $\F_{q^3}$.
Some of the technical proofs are in the Appendix to keep this paper short.

\section{Equivalence and invariants}\label{sec:equivandinv}

The study of the equivalence for subspace codes was initiated by Trautmann in \cite{Trautmann2} and the case of cyclic subspace codes has been investigated in \cite{Heideequiv} by Gluesing-Luerssen and Lehmann.
Therefore, motivated by \cite[Theorem 6.2 (a)]{Heideequiv}, we say that two cyclic subspace codes $\mathrm{Orb}(S_1)$ and $\mathrm{Orb}(S_2)$ in $\fqn$ are \textbf{linearly equivalent}  if there exists $i \in \{0,\ldots,n-1\}$ such that 
$\mathrm{Orb}(S_1)=\mathrm{Orb}(S_2^{q^i})$, 
where $S_2^{q^i}=\{ v^{q^i} \colon v \in S_2 \}$. This happens if and only if $S_1=\alpha S_2^{q^i}$, for some $\alpha \in \fqn^*$.
We can replace the action of the $q$-Frobenius maps $x\in \fqn\mapsto x^{q^i}\in \fqn$ with any automorphism $\sigma$ in $\mathrm{Aut}(\fqn)$, since this will still preserve the metric properties of the codes. 
Following \cite{Zullo}, we consider an extension of this definition, where we will denote by $S^{\sigma}=\lbrace \sigma(s)\colon s\in S\rbrace$ with $\sigma\in\mathrm{Aut}(q^n)$.

\begin{definition}
\label{def1}
Let $S_1$ and $S_2$ be two $\fq$-subspaces of $\fqn$. Then we say that $S_1$ and $S_2$ are \textbf{semilinearly equivalent} (or simply \textbf{equivalent}) if the associated codes $\mathrm{Orb}(S_1)$ and $\mathrm{Orb}(S_2)$ are semilinearly equivalent, that is there exist $\sigma \in \mathrm{Aut}(\fqn)$ and $\alpha \in \fqn$ such that $S_1=\alpha S_2^{\sigma}$.
In this case, we will also say that they are equivalent under the action of the pair $(\alpha, \sigma)\in\fqn\rtimes \mathrm{Aut}(\fqn)$.
\end{definition}

In the following we describe some invariants that can be used to establish whether or not two one-orbit cyclic subspace codes are equivalent.

The first invariant that we introduce regards the dimension of the square-span of a subspace. Indeed, the following is easy to check.

\begin{proposition}
Let $S_1,S_2$ be two $\fq$-subspaces of $\fqn$ and suppose that $S_1$ and $S_2$ are equivalent under the action of $(\alpha, \sigma)$, then $S_1^2$ and $S_2^2$ are equivalent under the action of $(\alpha^2, \sigma)$. In particular, $\dim_{\fq}(S_1^2)=\dim_{\fq}(S_2^2)$.
\end{proposition}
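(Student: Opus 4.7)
The plan is to unwind the definitions directly: write every element of $S_1$ as $\alpha\,\sigma(s)$ with $s\in S_2$, take products, and observe that the squaring operation commutes with the actions of scalar multiplication and of $\sigma$ in a transparent way. The main thing to keep in mind is that $\sigma\in\mathrm{Aut}(\fqn)$ need not be $\fq$-linear, only $\fq$-semilinear, but this still suffices to move $\fq$-spans through $\sigma$.

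First I would spell out that $\sigma$ maps $\fq$ bijectively to itself, since $\fq$ is the unique subfield of $\fqn$ of order $q$. This gives the identity $\mathrm{span}_{\fq}(\sigma(A))=\sigma(\mathrm{span}_{\fq}(A))$ for any subset $A\subseteq\fqn$: every $\fq$-linear combination $\sum \lambda_i\sigma(a_i)$ can be rewritten as $\sigma(\sum \mu_i a_i)$ by choosing $\mu_i\in\fq$ with $\sigma(\mu_i)=\lambda_i$, and conversely. In particular, for any $\fq$-subspace $T$ of $\fqn$, $T^{\sigma}$ is an $\fq$-subspace with the same $\fq$-dimension as $T$, and $(\beta T)^{\sigma}=\sigma(\beta)T^{\sigma}$.

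Next I would compute $S_1^2$. Using $S_1=\alpha S_2^{\sigma}$, every generator $a_1a_2$ with $a_i\in S_1$ has the form
\[
a_1a_2=\bigl(\alpha\sigma(s_1)\bigr)\bigl(\alpha\sigma(s_2)\bigr)=\alpha^2\,\sigma(s_1s_2),\quad s_1,s_2\in S_2.
\]
Hence the generating set of $S_1^2$ equals $\alpha^2\sigma(\{s_1s_2:s_1,s_2\in S_2\})$. Taking $\fq$-spans and using the identity above twice (once for $\sigma$, once for scalar multiplication by $\alpha^2$, which is $\fq$-linear),
\[
S_1^2=\alpha^2\,\sigma\bigl(\mathrm{span}_{\fq}\{s_1s_2:s_1,s_2\in S_2\}\bigr)=\alpha^2\,(S_2^2)^{\sigma}.
\]
This is exactly the statement that $S_1^2$ and $S_2^2$ are equivalent under the action of $(\alpha^2,\sigma)$.

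The dimension statement is then immediate: the map $x\mapsto \alpha^2\sigma(x)$ is an $\fq$-semilinear bijection of $\fqn$, so it sends any $\fq$-subspace to an $\fq$-subspace of the same $\fq$-dimension, giving $\dim_{\fq}(S_1^2)=\dim_{\fq}(S_2^2)$. There is no real obstacle here; the only conceptual subtlety is the semilinearity of $\sigma$, which is handled once and for all by the observation that $\sigma$ permutes $\fq$.
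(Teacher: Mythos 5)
Your proof is correct: the identity $S_1^2=\alpha^2\,(S_2^2)^{\sigma}$ follows exactly as you argue, with the semilinearity of $\sigma$ handled by the observation that $\sigma$ permutes $\fq$, and the dimension claim then follows from the fact that $x\mapsto\alpha^2\sigma(x)$ is an $\fq$-semilinear bijection. The paper gives no proof (it states the proposition is easy to check), and your argument is precisely the routine verification it has in mind, so there is nothing further to compare.
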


\begin{definition}
For any divisor $t$ of $n$ and for any $\fq$-subspace $S$ of $\fqn$, we denote by 
\[
\delta_t(S):=\dim_{\fqt}(\langle S\rangle_{\F_{q^t}})
\]
and 
\[
w_t(S):=\max\lbrace \dim_{\fqt}(S')\colon S'\in\mathcal{S}'\rbrace
\]
where \[
\mathcal{S}'=\lbrace S'\colon S' \text{ is }\fq\text{-subspace of }S\text{ and }\fqt\text{-subspace of }\fqn\rbrace.\]
\end{definition}

Note that the integer $\delta_t(S)$ has been introduced in \cite[Definition 4.5]{Heideequiv}. Clearly, if $\dim_{\fq}(S)>0$, then
\[
1\leq \delta_t(S)\leq \min\left\{ \frac{n}{t},\dim_{\fq}(S)\right\}
\]
and
\[
0\leq w_t(S)\leq \frac{\dim_{\fq}(S)}{t}.
\]
It is easy to see that these two integers are invariant under semilinear equivalence.
\begin{proposition}\label{prop:Fqt}
\label{prop:deltaewtinvariants}
Let $S_1,S_2$ be two $\fq$-subspaces of $\fqn$ and suppose that $S_1$ and $S_2$ are equivalent under the action of $(\alpha, \sigma)$.
If $t \mid n$, then $\langle S_1 \rangle_{\F_{q^t}}$ and $\langle S_2 \rangle_{\F_{q^t}}$ are equivalent under the action of $(\alpha, \sigma)$. In particular, $\delta_t(S_1)=\delta_t(S_2)$ and $w_t(S_1)=w_t(S_2)$.
\end{proposition}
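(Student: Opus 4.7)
The plan is to derive the identity $\langle S_1\rangle_{\fqt}=\alpha\,\langle S_2\rangle_{\fqt}^{\sigma}$ directly from the hypothesis $S_1=\alpha S_2^{\sigma}$, and then to deduce the invariance of $\delta_t$ and $w_t$ by showing that both multiplication by $\alpha$ and the action of $\sigma$ map $\fqt$-subspaces of $\fqn$ to $\fqt$-subspaces of the same $\fqt$-dimension. The key observation is that every $\sigma\in\mathrm{Aut}(\fqn)$ is a power of the Frobenius $x\mapsto x^q$, so the subfield $\fqt$ (which exists precisely because $t\mid n$) satisfies $\sigma(\fqt)=\fqt$, and $\sigma$ restricts to a field automorphism of $\fqt$.

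First I would verify the span identity by a double inclusion. A typical element of $\langle S_1\rangle_{\fqt}$ has the form $\sum_i\lambda_i s_i$ with $\lambda_i\in\fqt$ and $s_i=\alpha\sigma(t_i)$ for some $t_i\in S_2$; using $\sigma(\fqt)=\fqt$, one rewrites
\[
\sum_i\lambda_i\,\alpha\,\sigma(t_i)=\alpha\,\sigma\!\left(\sum_i\sigma^{-1}(\lambda_i)\,t_i\right),
\]
which lies in $\alpha\,\langle S_2\rangle_{\fqt}^{\sigma}$ since each $\sigma^{-1}(\lambda_i)\in\fqt$. The reverse inclusion is symmetric, obtained by applying the same argument to $S_2=\alpha^{-1}S_1^{\sigma^{-1}}$, and this yields the first assertion.

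Next, to conclude that $\delta_t(S_1)=\delta_t(S_2)$, I would argue that the map $\phi\colon v\mapsto\alpha\sigma(v)$ is a bijection of $\fqn$ sending $\fqt$-subspaces onto $\fqt$-subspaces of the same $\fqt$-dimension. Multiplication by the nonzero $\alpha$ is an $\fqt$-linear bijection, and if $\{v_1,\ldots,v_m\}$ is $\fqt$-linearly independent then so is $\{\sigma(v_1),\ldots,\sigma(v_m)\}$: from $\sum\lambda_i\sigma(v_i)=0$ one deduces $\sigma(\sum\sigma^{-1}(\lambda_i)v_i)=0$, hence $\lambda_i=0$ for all $i$. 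Combining this with the previously established $\fqt$-span identity gives the equality of $\delta_t$.

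For $w_t$, the same bijection $\phi$ restricts to a dimension-preserving bijection between the family $\mathcal{S}'$ for $S_2$ and the analogous family for $S_1$, so the two maxima coincide. The only genuine subtlety, and the reason the divisibility assumption $t\mid n$ appears in the statement, is the need for $\sigma(\fqt)=\fqt$; beyond this piece of bookkeeping no real obstacle arises, and the proof should amount to a clean verification of the three points above.
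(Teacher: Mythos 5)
Your proof is correct and is exactly the routine verification that the paper leaves implicit (the proposition is introduced with ``It is easy to see that these two integers are invariant''): you establish $\langle S_1\rangle_{\fqt}=\alpha\,\langle S_2\rangle_{\fqt}^{\sigma}$ and observe that $v\mapsto\alpha\sigma(v)$ sends $\fqt$-subspaces to $\fqt$-subspaces of the same $\fqt$-dimension, which gives both $\delta_t(S_1)=\delta_t(S_2)$ and $w_t(S_1)=w_t(S_2)$. One minor slip worth fixing: a general $\sigma\in\mathrm{Aut}(\fqn)$ is a power of the $p$-Frobenius with $q=p^h$, not necessarily of $x\mapsto x^q$, but the fact you actually use, namely $\sigma(\fqt)=\fqt$, holds in any case because $\fqt$ is the unique subfield of $\fqn$ of order $q^t$.
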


\section{A classification result}\label{sec:afirstclass}

We now present a classification of three dimensional $\fq$-subspaces of $\fqn$, based on the equivalence given in Definition \ref{def1}, yielding a classification of three dimensional one-orbit cyclic subspace codes, by making use of the following lemma (an extension of \cite[Lemma 4]{BSZ2015}).

\begin{lemma}\cite[Lemma 3.1]{NPSZminsize}\label{lemma:power}
Let $S$ be an $\fq$-subspace of $\fqn$ of dimension $k\geq2$ and let $\lmb \in \fqn\setminus\fq$. Let $t=\dim_{\fq}(\fq(\lmb))$, where $\fq(\lmb)$ denotes the field extension of $\fq$ generated by $\lmb$.
\begin{itemize}
    \item [(a)] If $\dim_{\fq}(S\cap \lmb S)=k$, then $S$ is an $\fq(\lmb)$-subspace.
    \item [(b)] Suppose that $\dim_{\fq}(S\cap \lmb S)=k-1$ and $t\geq k$. Then $S=\mu \langle 1,\lmb,\ldots,\lmb^{k-1}\rangle_{\fq}$, for some $\mu \in \fqn^*$ and $t \neq k$.
    \item[(c)] Suppose that $\dim_{\fq}(S\cap \lmb S)=k-1$ and $t\leq k-1$. Write $k=t\ell+m$ with $m<t$, then $m>0$ and $S=\overline{S}\oplus \mu\langle 1,\lmb,\ldots,\lmb^{m-1}\rangle_{\fq}$, where $\overline{S}$ is an $\F_{q^t}$-subspace of dimension $\ell$, $\mu \in \fqn^*$ and $\mu \F_{q^t} \cap \overline{S}=\{0\}$.
    In particular, $t$ is a proper divisor of $n$.
\end{itemize}
\end{lemma}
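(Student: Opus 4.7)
The plan is to prove (a) directly and to establish (b) and (c) simultaneously by induction on $k$. For (a), the hypothesis $\dim_{\fq}(S\cap\lmb S)=k$ means $S\subseteq\lmb S$, and equating dimensions gives $\lmb S=S$; iterating yields $\lmb^{i}S\subseteq S$ for all $i\geq 0$, so $S$ is stabilised by $\fq[\lmb]=\fq(\lmb)$.

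For (b) and (c), the central object is $W:=S\cap\lmb^{-1}S$, an $\fq$-subspace of $S$ of dimension $k-1$ satisfying $\lmb W\subseteq S$. The base case $k=2$ is direct: any nonzero $v\in S\cap\lmb S$ has the form $v=\lmb u$ with $u\in S$, and $u,\lmb u$ are $\fq$-linearly independent since $\lmb\notin\fq$, so $S=u\langle 1,\lmb\rangle_{\fq}$; the constraint $t\neq 2$ is forced because otherwise $\langle 1,\lmb\rangle_{\fq}=\fqt$ would make $S$ stabilised by $\fqt$, contradicting $\dim_{\fq}(S\cap\lmb S)=1$.

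For the inductive step ($k\geq 3$), the relation $W+\lmb W\subseteq S$ combined with a dimension count forces $\dim_{\fq}(W\cap\lmb W)\in\{k-1,k-2\}$. If $\dim_{\fq}(W\cap\lmb W)=k-1$, then $W=\lmb W$ is $\fqt$-stable, so $t\mid k-1$, and any $w\in S\setminus W$ yields $S=W\oplus\fq w$; this places $S$ in case (c) with $\overline{S}=W$, $\mu=w$, $m=1$, and the condition $W\cap w\fqt=\{0\}$ follows because $W\cap w\fqt$ is an $\fqt$-subspace of $w\fqt$ that misses $w$. If $\dim_{\fq}(W\cap\lmb W)=k-2$, apply the induction hypothesis to $W$: when $t\geq k-1$, part (b) for $W$ combined with $\lmb W\subseteq S$ and $\dim_{\fq}(S)=k$ forces $S=\mu\langle 1,\lmb,\ldots,\lmb^{k-1}\rangle_{\fq}$, and $t\neq k$ is ruled out exactly as in the base case; when $t\leq k-2$, part (c) for $W$ gives $W=\overline{W}\oplus\nu\langle 1,\ldots,\lmb^{m'-1}\rangle_{\fq}$ with $k-1=t\ell'+m'$, and then $S=W+\lmb W=\overline{W}\oplus\nu\langle 1,\ldots,\lmb^{m'}\rangle_{\fq}$ provides the desired decomposition with $m=m'+1$.

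The single technical obstacle is excluding the boundary case $m'+1=t$: there $\nu\langle 1,\ldots,\lmb^{m'}\rangle_{\fq}=\nu\fqt$, making the candidate $\overline{W}\oplus\nu\fqt$ an $\fqt$-subspace, which would force $\dim_{\fq}(S\cap\lmb S)=k$ against the hypothesis. Ruling out this boundary is precisely what guarantees $m<t$ in case (c) and closes the induction.
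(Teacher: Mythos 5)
This lemma is not proved in the paper at all: it is quoted verbatim from \cite{NPSZminsize} (Lemma 3.1 there, an extension of Lemma 4 of \cite{BSZ2015}), so there is no in-paper argument to compare yours against; I can only judge your proposal on its own merits, and it is essentially sound. Part (a) is immediate as you say. For (b)--(c), the induction on $k$ via $W=S\cap\lmb^{-1}S$ works: $W\subseteq S$, $\lmb W\subseteq S$, $\dim_{\fq}W=k-1$, and the count $\dim_{\fq}(W\cap\lmb W)\in\{k-1,k-2\}$ is correct. In the first case $W=\lmb W$ is $\fqt$-stable, so $t\mid k-1$ and $S=W\oplus\langle w\rangle_{\fq}$ is the case (c) shape with $m=1$; in the second case $S=W+\lmb W$ has dimension exactly $k$, and the inductive hypothesis on $W$ (using $\lmb\overline{W}=\overline{W}$) yields either the polynomial form or the decomposition with $m=m'+1$, while your exclusion of the boundary $m'+1=t$ (equivalently $t\mid k$) via $\fqt$-stability of $\overline{W}\oplus\nu\fqt$ is precisely what gives $m>0$. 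Two pieces of bookkeeping that a complete write-up must make explicit, though neither breaks the argument: (i) under the hypothesis $t\geq k$ of (b), both the case $\dim_{\fq}(W\cap\lmb W)=k-1$ (which forces $t\mid k-1$) and the sub-case $t\leq k-2$ are impossible, so one necessarily lands in the polynomial-basis conclusion and never in the (c)-shape; and (ii) in your sub-case ``$t\geq k-1$'' the value $t=k-1$ is vacuous, since part (b) applied to $W$ already asserts $t\neq k-1$, so the claimed conclusion $S=\mu\langle 1,\lmb,\ldots,\lmb^{k-1}\rangle_{\fq}$ is only ever produced with $t\geq k$, as it must be. Finally, the closing claim of (c) that $t$ is a proper divisor of $n$ should be recorded: $t\mid n$ because $\fqt=\fq(\lmb)\subseteq\fqn$, and $t\leq k-1<n$.
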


We are now ready to prove our first classification result, which is mainly based on the first invariant introduced in Section \ref{sec:equivandinv}.

\begin{theorem}\label{thm:class3dim}
Let $S$ be an $\fq$-subspace of $\fqn$ such that $\dim_{\fq}(S)=3$. Then
\begin{itemize}
    \item [1)] $\dim_{\fq}(S^2)=3$ if and only if $S$ is a multiplicative coset of $\mathbb{F}_{q^3}$, i.e. $3\mid n$ and $S=\mu \F_{q^3}$ for some $\mu \in \fqn$; \vspace{0.3cm}
    \item[2)] $\dim_{\fq}(S^2)=4$ if and only if one of the following holds:
        \begin{itemize}
            \item [2.1)] $S=\mu\la 1,\lmb,\lmb^2\ra_{\fq}$ for some $\mu,\lmb\in\fqn\setminus\fq$, $4\mid n$ and $\delta_4(S)=1$.
            \item[2.2)] $S=\omega\mathbb{F}_{q^2}+\la \mu\ra_{\fq}$ for some $\mu,\omega\in\fqn$ such that $\mu\notin\omega\mathbb{F}_{q^2}$, $4\mid n$, $\delta_4(S)=1$ and $w_2(S)=1$.
        \end{itemize}
        \vspace{0.3cm}
    \item[3)] $\dim_{\fq}(S^2)=5$ if and only if one the following holds:
        \begin{itemize}
            \item [3.1)] $S=\mu\la 1,\lmb,\lmb^2\ra_{\fq}$ for some $\mu,\lmb\in\fqn\setminus\fq$ such that $\dim_{\fq}(\fq(\lmb))>4$; also, if $2\mid n$ then $\delta_2(S)=3$ and $w_2(S)=0$.
            \item[3.2)] $S=\omega\mathbb{F}_{q^2}+\la \mu\ra_{\fq}$ for some $\mu,\omega\in\fqn$ such that $\mu\notin\omega\mathbb{F}_{q^4}$, $\delta_2(S)=2$ and $w_2(S)=1$.\vspace{0.3cm} 
        \end{itemize}
    \item[4)] $\dim_{\fq}(S^2)=6$ if and only if $S$ is a Sidon space. 
\end{itemize}
\end{theorem}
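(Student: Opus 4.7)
The plan is to organize the classification by the value of $\dim_{\fq}(S^2)\in\lbrace 3,4,5,6\rbrace$, using the auxiliary invariant
\[
M(S):=\max\lbrace \dim_{\fq}(S\cap\alpha S)\colon \alpha\in\fqn^*\setminus \fq^*\rbrace.
\]
Two bridging facts drive the argument. \emph{First}, $S$ is a Sidon space if and only if $M(S)\leq 1$: a non-Sidon relation $ab=cd$ with $\lbrace a\fq,b\fq\rbrace\neq\lbrace c\fq,d\fq\rbrace$ forces $\beta:=a/c\notin \fq^*$ and produces two $\fq$-independent elements $b,c\in S\cap\beta^{-1}S$, while conversely $\fq$-independent $x,y$ with $\beta x,\beta y\in S$ yield the relation $(\beta x)y=x(\beta y)$ violating the Sidon property. \emph{Second}, $M(S)=3$ if and only if $S$ is a multiplicative coset of $\F_{q^3}$: the equality $S=\alpha S$ with $\alpha\notin\fq$ forces $\fq(\alpha)$-invariance, and $\dim_{\fq}S=3$ pins $\fq(\alpha)=\F_{q^3}$.

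Case~1 follows from the second fact (after scaling so that $1\in S$, $\dim_{\fq}S^2=3$ forces $S=S^2$ to be a subring, hence a subfield, namely $\F_{q^3}$). The heart of the proof is the case $M(S)=2$. Applying Lemma~\ref{lemma:power} to some $\alpha\notin\fq$ with $\dim_{\fq}(S\cap\alpha S)=2$ leaves two structures. Part~(b) yields $S=\mu\langle 1,\lmb,\lmb^2\rangle_{\fq}$ with $t:=\dim_{\fq}\fq(\lmb)\geq 4$ (the case $t=3$ is excluded as Case~1); then $S^2=\mu^2\langle 1,\lmb,\lmb^2,\lmb^3,\lmb^4\rangle_{\fq}$ has $\fq$-dimension $\min(t,5)$, equal to $4$ for $t=4$ (Case~2.1) and to $5$ for $t\geq 5$ (Case~3.1). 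Part~(c), with $k=3$ and $t=2$, yields $S=\omega\F_{q^2}\oplus \fq\mu$ with $2\mid n$ and $\mu\notin\omega\F_{q^2}$; then
\[
S^2=\omega^2\F_{q^2}+\omega\mu\F_{q^2}+\fq\mu^2,
\]
where the first two summands contribute $4$ dimensions. Setting $\tau:=\mu/\omega$, the condition $\mu^2\in\omega^2\F_{q^2}+\omega\mu\F_{q^2}$ becomes $\tau^2\in\F_{q^2}+\F_{q^2}\tau$, i.e.\ $\tau$ satisfies a degree-two polynomial over $\F_{q^2}$, equivalently $\tau\in\F_{q^4}$. Hence $\dim_{\fq}S^2=4$ precisely when $\mu\in\omega\F_{q^4}$ (Case~2.2) and $\dim_{\fq}S^2=5$ precisely when $\mu\notin\omega\F_{q^4}$ (Case~3.2). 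Case~4 then closes: if $S$ is Sidon, Theorem~\ref{lowerboundSidon} gives $\dim_{\fq}S^2\geq 6$, hence $=6$; conversely, $\dim_{\fq}S^2=6$ and non-Sidon would force $M(S)\geq 2$, contradicting the bound $\dim_{\fq}S^2\leq 5$ just established.

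It remains to verify the invariant side-conditions. In Case~2.1, $\delta_4(S)=1$ is equivalent to $\lmb\in\F_{q^4}$, which together with $\lmb\notin\fq$ and $\dim_{\fq}S=3$ forces $\deg\lmb=4$ and $4\mid n$. In Case~2.2, $\delta_4(S)=1$ translates to $\mu\in\omega\F_{q^4}$ while $w_2(S)=1$ records the $\F_{q^2}$-line $\omega\F_{q^2}\subseteq S$. In Case~3.2, $\mu\notin\omega\F_{q^4}$ excludes Case~2.2 while $\delta_2(S)=2$ and $w_2(S)=1$ follow directly from the structural form. In Case~3.1 with $2\mid n$, $\deg\lmb>4$ gives $\F_{q^2}$-independence of $\lbrace 1,\lmb,\lmb^2\rbrace$, whence $\delta_2(S)=3$ and $w_2(S)=0$ by a short scalar argument. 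The main technical obstacle I expect is the algebraic dichotomy $\mu^2\in\omega^2\F_{q^2}+\omega\mu\F_{q^2}\Leftrightarrow \mu\in\omega\F_{q^4}$ arising from part~(c) of Lemma~\ref{lemma:power}; together with the characterization of Sidon spaces by $M(S)\leq 1$, these are the two genuinely nontrivial steps, while the remainder reduces to direct computation of $S^2$ and the unpacking of $\delta_t$ and $w_t$.
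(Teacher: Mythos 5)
Your proposal is correct and follows essentially the same route as the paper: reduce via Theorem~\ref{lowerboundSidon} (together with the standard intersection characterization of Sidon spaces) to the case $\dim_{\fq}(S\cap\lambda S)=2$, apply Lemma~\ref{lemma:power}, compute $S^2$ for the two resulting shapes $\mu\langle 1,\lambda,\lambda^2\rangle_{\fq}$ and $\omega\mathbb{F}_{q^2}+\langle\mu\rangle_{\fq}$, and settle the $4$-versus-$5$ dichotomy by the degree-two-over-$\mathbb{F}_{q^2}$ argument, exactly as in the paper. The only (harmless) deviations are that you prove the Sidon/intersection equivalence explicitly and obtain Case~4 by exclusion from the earlier cases instead of citing \cite[Lemma 20]{Roth}.
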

\begin{proof}
Since $\dim_{\fq}(S)=3$ then $3\leqslant \dim_{\fq} (S^2)\leqslant 6$, so we have four cases to analyze and we split the analysis according to the dimension of the square-span of $S$.\\
\textbf{Case 1)}
Suppose that $\dim_{\fq}(S^2)=3=\dim_{\fq}(S)$ then, without loss of generality we may assume that $1\in S$. Then $S^2=S$ and so for any $\lmb\in S\setminus\fq$, we get $S=\lmb S$, i.e. $S$ is $\fq(\lmb)$-subspace of $\fqn$ and since $\dim_{\fq}(S)=3$ then $\fq(\lmb)=\mathbb{F}_{q^3}$ and $3\mid n$. \\
\textbf{Case 2)} Suppose that $\dim_{\fq}(S^2)=4$, then $S$ is not a Sidon space by Theorem \ref{lowerboundSidon}, i.e. there exists $\lmb\in\fqn\setminus\fq$ such that
\[
\dim_{\fq}(S\cap \lmb S)>1,
\]
and hence
$2\leqslant\dim_{\fq}(S\cap \lmb S)\leqslant 3$. If $\dim_{\fq}(S \cap \lmb S)=3=\dim_{\fq}(S)$ then $S=\lmb S$, which implies $\fq(\lmb)=\mathbb{F}_{q^3}$ and $\dim_{\fq}( S^2)=3$, a contradiction to $\dim_{\fq}(S^2)=4$. Therefore $\dim_{\fq}(S\cap \lmb S)=2=\dim_{\fq}(S)-1$. Lemma \ref{lemma:power}  implies that one of the following cases occurs
\begin{itemize}
    \item[ 2.1)] $S=\mu\la 1,\lmb,\lmb^2\ra_{\fq}$ for some $\mu\in\fqn^{*}\setminus\fq$ and $\fq(\lmb)\neq\mathbb{F}_{q^2}$;
    \item[ 2.2)] $S=\omega\mathbb{F}_{q^2}+\la \mu\ra_{\fq}$ for some $\mu\in\fqn\setminus\omega\mathbb{F}_{q^2}$ and $\fq(\lmb)=\mathbb{F}_{q^2}$.
\end{itemize}
If $S=\mu\la 1,\lmb,\lmb^2\ra_{\fq}$, then $S^2=\mu^2\la 1,\lmb,\lmb^2,\lmb^3,\lmb^4\ra_{\fq}$ and since $\dim_{\fq}(S^2)=4$, the elements $1,\lmb,\lmb^2,\lmb^3,\lmb^4$ are $\fq$-linearly dependent, i.e. $\lmb$ is a root of a non-zero polynomial of degree less than or equal to $4$ over $\fq$. In particular, if the minimal polynomial of $\lmb$ over $\fq$ has degree strictly less than $4$, then this would give a contradiction to $\dim_{\fq}( S^2)=4$. Therefore the minimal polynomial of $\lmb$  over $\fq$ has degree $4$, which implies that $\fq(\lmb)=\mathbb{F}_{q^4}$, so $4|n$ and $S\subseteq\mu\mathbb{F}_{q^4}$, i.e. $\delta_4(S)=1$.\\
If $S=\omega\mathbb{F}_{q^2}+\la \mu\ra_{\fq}$, then $S^2=\omega^2\mathbb{F}_{q^2}+\omega\mu\mathbb{F}_{q^2}+\la\mu^2\ra_{\fq}$. Let observe that if $\omega^2\mathbb{F}_{q^2}=\omega\mu\mathbb{F}_{q^2}$ then $\frac{\mu}{\omega}\in\mathbb{F}_{q^2}$, i.e. $\mu\in\omega\mathbb{F}_{q^2}$, a contradiction. Therefore $\mu^2\in\omega^2\mathbb{F}_{q^2}+\omega\mu\mathbb{F}_{q^2}$, i.e. there exist $\alpha,\beta\in\mathbb{F}_{q^2}$ such that $\mu^2=\alpha\omega^2+\beta\omega\mu$. This implies that $\frac{\omega}{\mu}$ is a root of the polynomial $\alpha x^2+\beta x -1=0$ whose coefficients are in $\mathbb{F}_{q^2}$. Then $\frac{\omega}{\mu}\in\mathbb{F}_{q^4}\setminus \F_{q^2}$. So, we have that there exists $\rho\in\mathbb{F}_{q^4}\setminus\mathbb{F}_{q^2}$ such that $\mu=\rho\omega$. Then
\[
S=\omega\mathbb{F}_{q^2}+\omega\la \rho\ra_{\fq}=\omega(\mathbb{F}_{q^2}+\la\rho\ra_{\fq})\subset \omega\mathbb{F}_{q^4}
\]
and so $4\mid n$.\\
\textbf{Case 3)} Suppose that $\dim_{\fq}(S^2)=5$, then $S$ is not a Sidon space by Theorem \ref{lowerboundSidon}, and so arguing as before, there exists $\lmb\in\fqn\setminus\fq$ such that $\dim_{\fq}(S\cap \lmb S)=2=\dim_{\fq}(S)-1$. Then Lemma \ref{lemma:power} and $\dim_{\fq}(S^2)=5$ implies that $S$ has one of the following forms:
\begin{itemize}
    \item [3.1)] $S=\mu\la 1,\lmb,\lmb^2\ra_{\fq}$ for some $\mu\in\fqn^{*}\setminus\fq$ and $\dim_{\fq}\fq(\lmb)> 4$;
    \item[3.2)] $S=\omega\mathbb{F}_{q^2}+\la \mu\ra_{\fq}$ for some $\mu\in\fqn\setminus\omega\mathbb{F}_{q^2}$ and $\fq(\lmb)=\mathbb{F}_{q^2}$ and $\mu^2\notin \omega^2\mathbb{F}_{q^2}+\omega\mu\mathbb{F}_{q^2}$.
\end{itemize}
In particular, in Case 3.2), $\mu^2\notin\omega^2\mathbb{F}_{q^2}+\omega\mu\mathbb{F}_{q^2}$ because otherwise $\frac{\mu}{\omega} $ would be root of a non zero polynomial of degree 2 over $\mathbb{F}_{q^2}$, hence $\frac{\mu}{\omega}\in\mathbb{F}_{q^4}$. In this case $\omega S\subseteq \F_{q^4}$ and so $\dim_{\fq}(S^2)\leq 4$, a contradiction. \\
Moreover, in Case 3.1), since $\lmb\notin\mathbb{F}_{q^2}$, if $\delta_2(S)=2$, then there exist $a,b\in\mathbb{F}_{q^2}$ such that $\lmb^2=a+b\lmb$, i.e. $\mathbb{F}_{q^2}(\lmb)=\mathbb{F}_{q^4}$, a contradiction. Hence $\delta_2(S)=3$. If there exists $\xi\in\fqn$ such that $\xi\mathbb{F}_{q^2}\subseteq S$, then $S\subseteq \xi\mathbb{F}_{q^2}\oplus\mu\mathbb{F}_{q^2}$, where $\mu\in S\setminus \xi\mathbb{F}_{q^2}$, and since $\delta_2(S)=3$, we have a contradiction. Thus $w_2(S)=0$.\\ In Case 3.2) it is clear that $\delta_2(S)=2$ and $w_2(S)=1$.\\
\textbf{Case 4)} If $\dim_{\fq}(S^2)=6$, since $\dim_{\fq}(S)=3$, then $S^2$ has its maximum possible dimension and by \cite[Lemma 20]{Roth} it follows that $S$ is a Sidon space. Conversely, if $S$ is a Sidon space, then Theorem \ref{lowerboundSidon} implies the assertion.
\end{proof}

Cases 2.1 and 2.2 give rise to equivalent examples. To see this, we need the following well-known lemma (which follows by \cite[Theorem 2.24]{lidl_finite_1997}). 

\begin{lemma}
\label{lem:hyperplanes}
Let $H_1,H_2$ be two $\fq$-subspaces of $\fqn$ such that $\dim_{\fq}(H_1)=\dim_{\fq}(H_2)=n-1$, then there exists $\xi\in\fqn^*$ such that $H_2=\xi H_1$.
\end{lemma}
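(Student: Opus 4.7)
The plan is to view $\fqn$ as an $\fq$-vector space and to realize each $(n-1)$-dimensional $\fq$-subspace as the kernel of a non-zero $\fq$-linear form. Since $\dim_{\fq}(H_i)=n-1$, each $H_i$ is a hyperplane, and therefore $H_i = \ker(f_i)$ for some non-zero $\fq$-linear map $f_i\colon \fqn \to \fq$. The cited result \cite[Theorem 2.24]{lidl_finite_1997} (non-degeneracy of the trace form) guarantees that every such $\fq$-linear form can be written uniquely as $x\mapsto \Tr_{\fqn/\fq}(\alpha_i x)$ for some $\alpha_i \in \fqn^*$. Hence
\[
H_i = \{ x \in \fqn : \Tr_{\fqn/\fq}(\alpha_i x)=0\},\qquad i=1,2.
\]

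Next, I would set $\xi = \alpha_1 \alpha_2^{-1} \in \fqn^*$ and verify directly that $H_2 = \xi H_1$. Indeed, for any $y \in \fqn$,
\[
y \in \xi H_1 \iff \xi^{-1} y \in H_1 \iff \Tr_{\fqn/\fq}(\alpha_1 \xi^{-1} y) = 0 \iff \Tr_{\fqn/\fq}(\alpha_2 y)=0 \iff y\in H_2,
\]
using $\alpha_1 \xi^{-1} = \alpha_2$. This gives the required $\xi$ and completes the argument.

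The only potentially non-routine ingredient here is the existence of the representation $f_i(x)=\Tr_{\fqn/\fq}(\alpha_i x)$, which however is standard and explicitly cited. Everything else is a one-line verification. So there is no real obstacle; the proof is essentially a two-step reduction: first identify hyperplanes with non-zero linear forms, then identify non-zero linear forms with non-zero elements of $\fqn$ via the trace pairing, after which the scaling factor $\xi$ is read off directly.
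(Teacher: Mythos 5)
Your proof is correct and follows exactly the route the paper intends: it invokes \cite[Theorem 2.24]{lidl_finite_1997} to write each hyperplane as $\ker(\Tr_{\fqn/\fq}(\alpha_i\,\cdot))$ and then reads off $\xi=\alpha_1\alpha_2^{-1}$, which is precisely the standard argument the paper alludes to (it cites the same theorem and gives no further details). Nothing is missing; the verification that $\xi H_1=H_2$ is carried out correctly.
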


\begin{proposition}
\label{prop:allCase2equivalent}
Let $S_1$ and $S_2$ be two three-dimensional $\fq$-subspaces of $\fqn$ such that
\[ S_1=\mu\la 1,\lmb,\lmb^2\ra_{\fq}\subseteq \mu\mathbb{F}_{q^4}\,\,\,\text{and}\,\,\,S_2=\omega\mathbb{F}_{q^2}+\la\eta\ra_{\fq}\subseteq \omega\mathbb{F}_{q^4},  \]
for some $\mu,\lmb\in\fqn\setminus\fq$ and $\eta\in\fqn\setminus\omega\mathbb{F}_{q^2}$. Then $S_1$ and $S_2$ are equivalent.
\end{proposition}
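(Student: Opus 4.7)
My plan is to reduce both $S_1$ and $S_2$ to $\fq$-hyperplanes of the common subfield $\mathbb{F}_{q^4}$ by scaling, and then invoke Lemma \ref{lem:hyperplanes} inside $\mathbb{F}_{q^4}$ to obtain the required intertwining scalar. Since both $\mu$ and $\omega$ are nonzero, scaling by $\mu^{-1}$ and $\omega^{-1}$ are legitimate operations realizing the linear equivalence, so it is enough to compare $\mu^{-1}S_1$ and $\omega^{-1}S_2$.

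First I would note that $\mu^{-1}S_1=\langle 1,\lambda,\lambda^2\rangle_{\fq}$ and $\omega^{-1}S_2=\mathbb{F}_{q^2}+\langle \omega^{-1}\eta\rangle_{\fq}$ are both contained in $\mathbb{F}_{q^4}$ by the hypotheses, so they both are $\fq$-subspaces of the $4$-dimensional $\fq$-vector space $\mathbb{F}_{q^4}$. Since $\dim_{\fq}(\mu^{-1}S_1)=\dim_{\fq}(\omega^{-1}S_2)=3$, they are both $\fq$-hyperplanes of $\mathbb{F}_{q^4}$.

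Next, I would apply Lemma \ref{lem:hyperplanes} with $\mathbb{F}_{q^4}$ playing the role of the ambient field $\fqn$ (the proof of this lemma, which rests on the trace-form duality between hyperplanes and nonzero vectors up to scalars, goes through verbatim over any finite extension). This yields some $\xi\in\mathbb{F}_{q^4}^{*}$ such that $\mu^{-1}S_1=\xi\,\omega^{-1}S_2$. Setting $\alpha=\mu\xi\omega^{-1}\in\fqn^{*}$ we obtain $S_1=\alpha S_2$, which establishes that $S_1$ and $S_2$ are linearly equivalent (hence semilinearly equivalent with $\sigma=\mathrm{id}$) in the sense of Definition \ref{def1}.

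The only delicate point is the legitimacy of invoking Lemma \ref{lem:hyperplanes} over the subfield $\mathbb{F}_{q^4}$ rather than over the full field $\fqn$; I would address this briefly by pointing out that the cited result from \cite{lidl_finite_1997} applies to any finite field, and in particular to $\mathbb{F}_{q^4}$. Everything else is routine scalar bookkeeping, and the resulting equivalence uses only a scalar in $\fqn^{*}$, with no Frobenius twist required.
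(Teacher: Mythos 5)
Your proof is correct and follows essentially the same route as the paper: scale by $\mu^{-1}$ and $\omega^{-1}$ to view both subspaces as $\fq$-hyperplanes of $\mathbb{F}_{q^4}$, apply Lemma \ref{lem:hyperplanes} there to get $\mu^{-1}S_1=\xi\,\omega^{-1}S_2$, and conclude $S_1=\mu\xi\omega^{-1}S_2$. Your explicit remark that the hyperplane lemma is being invoked over the subfield $\mathbb{F}_{q^4}$ is a welcome clarification of a point the paper leaves implicit.
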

\begin{proof}
By Lemma \ref{lem:hyperplanes} there exists $\xi\in\fqn^*$ such that $\mu^{-1}S_1=\xi(\omega^{-1} S_2)$, that is $S_1=\xi\mu\omega^{-1} S_2$, i.e. $S_1$ and $S_2$ are equivalent.
\end{proof}

\begin{remark}
Proposition \ref{prop:allCase2equivalent} shows that all the subspaces in Case 2) are equivalent to a subspace of Case 2.1), i.e. they all admit a polynomial basis. Moreover, by Theorem \ref{thm:class3dim} and Proposition \ref{prop:deltaewtinvariants} we get that three dimensional subspaces in $\fqn$ as in Cases 3.1) and 3.2) are not equivalent under the action of semilinear equivalence.
\end{remark}

We now discuss the equivalence among the codes as in Case 3.1) and later those of in Case 3.2).

\begin{theorem}\label{thm:polcaseequiv}
Let $S,T$ be two $\fq$-subspaces of $\fqn$ of dimension $3$ such that 
\[
S=\langle 1,\lmb,\lmb^2\rangle_{\fq}\,\,\,\text{ and }\,\,\, T=\langle 1,\mu,\mu^2\rangle_{\fq}
\]
for some $\lmb,\mu\in\fqn\setminus\fq$ such that $\dim_{\fq}(\fq(\lmb))>4$ and $\dim_{\fq}(\fq(\mu))>4$. Then $S$ and $T$ are equivalent under the action of $(\xi,\sigma)\in \fqn^*\rtimes \mathrm{Aut}(\fqn)$ if and only if $\lmb^{\sigma}=\frac{\alpha_0+\alpha_1\mu}{\beta_0+\beta_1\mu}$ with $(\alpha_1,\beta_1)\neq(0,0)$.
\end{theorem}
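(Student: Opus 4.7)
The plan is to reduce the equivalence condition to a polynomial identity in $\fq[X]$ and then exploit the way the degree of a rational function doubles under squaring. I interpret the equivalence as $T=\xi S^{\sigma}$, which is the direction in which the formula expresses $\lmb^{\sigma}$ in terms of $\mu$.

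For the forward direction, assume $T=\xi S^{\sigma}$. Since $\dim_{\fq}\fq(\mu)>4$, the elements $1,\mu,\mu^{2}$ are $\fq$-linearly independent and every element of $T=\langle 1,\mu,\mu^{2}\rangle_{\fq}$ admits a unique $\fq$-linear expansion in them. Expanding the three basis vectors $\xi,\xi\lmb^{\sigma},\xi(\lmb^{\sigma})^{2}$ of $T$ produces unique polynomials $A,B,C\in\fq[X]$ of degree at most $2$ with $\xi=A(\mu)$, $\xi\lmb^{\sigma}=B(\mu)$ and $\xi(\lmb^{\sigma})^{2}=C(\mu)$. Multiplying the first and third relations gives $A(\mu)C(\mu)=B(\mu)^{2}$; since $A(X)C(X)-B(X)^{2}\in\fq[X]$ has degree at most $4$ and vanishes at $\mu$, whose minimal polynomial over $\fq$ has degree greater than $4$, we conclude $A(X)C(X)=B(X)^{2}$ in $\fq[X]$.

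The crux is then a rational-function degree argument. Write $R(X)=B(X)/A(X)$ in lowest terms as $p(X)/q(X)$ with $\gcd(p,q)=1$. Then $\gcd(p^{2},q^{2})=1$, so $R(X)^{2}=p(X)^{2}/q(X)^{2}$ is already in lowest terms and its reduced degree equals $2\deg R$. But the identity $AC=B^{2}$ shows $R(X)^{2}=C(X)/A(X)$, whose reduced degree is at most $\max(\deg A,\deg C)\leqslant 2$. Combining these bounds yields $\deg R\leqslant 1$, so $\lmb^{\sigma}=R(\mu)=(\alpha_{0}+\alpha_{1}\mu)/(\beta_{0}+\beta_{1}\mu)$ for some $\alpha_{i},\beta_{i}\in\fq$. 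Finally $(\alpha_{1},\beta_{1})\neq(0,0)$, since otherwise $\lmb^{\sigma}\in\fq$ would contradict $\dim_{\fq}\fq(\lmb)>4$.

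For the converse, set $\xi=(\beta_{0}+\beta_{1}\mu)^{2}$. Since the formula requires $(\beta_{0},\beta_{1})\neq(0,0)$ and $\mu\notin\fq$, we have $\beta_{0}+\beta_{1}\mu\neq 0$ and hence $\xi\in\fqn^{*}$. A direct calculation shows that $\xi$, $\xi\lmb^{\sigma}=(\beta_{0}+\beta_{1}\mu)(\alpha_{0}+\alpha_{1}\mu)$ and $\xi(\lmb^{\sigma})^{2}=(\alpha_{0}+\alpha_{1}\mu)^{2}$ all lie in $\langle 1,\mu,\mu^{2}\rangle_{\fq}=T$, and they are $\fq$-linearly independent because $1,\lmb^{\sigma},(\lmb^{\sigma})^{2}$ are (using $\dim_{\fq}\fq(\lmb^{\sigma})=\dim_{\fq}\fq(\lmb)>4$). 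Thus $\xi S^{\sigma}=T$. The main obstacle I anticipate is the rational-function degree step above: once the polynomial identity $AC=B^{2}$ is available, it is unique factorisation in $\fq[X]$, packaged as the doubling of degree under squaring in lowest terms, that forces $B/A$ to collapse to a M\"obius expression, and every other step is essentially routine bookkeeping.
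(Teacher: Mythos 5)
Your proof is correct and takes essentially the same route as the paper: both expand $\xi$, $\xi\lambda^{\sigma}$, $\xi\lambda^{2\sigma}$ in the basis $1,\mu,\mu^{2}$, use $\dim_{\fq}(\fq(\mu))>4$ to promote the relation $B(\mu)^{2}=A(\mu)C(\mu)$ to a polynomial identity, and in the converse choose $\xi=(\beta_{0}+\beta_{1}\mu)^{2}$ and compare dimensions. The only difference is packaging of the key step: the paper factors $p_{1}=t\,s$ into linear factors and uses coprimality to force $p_{0}=\alpha s^{2}$ and $p_{2}=\beta t^{2}$, whereas you obtain the same M\"obius conclusion from the degree-doubling of $B/A$ in lowest terms, an equivalent appeal to unique factorization in $\fq[x]$ that sidesteps the paper's small case analysis.
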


\begin{theorem}\label{thm:inequiv123}
Let $S,T$ be two $\fq$-subspaces of $\fqn$ of dimension $3$ such that 
\[
S=\mathbb{F}_{q^2}+\langle\mu\rangle_{\fq}\,\,\,\text{ and }\,\,\, T=\mathbb{F}_{q^2}+\langle \eta\rangle_{\fq}
\]
for some $\mu,\eta\in\fqn\setminus\mathbb{F}_{q^2}$. Then $S$ and $T$ are equivalent under the action of $(\xi,\sigma)\in \fqn^*\rtimes \mathrm{Aut}(\fqn)$ if and only if $\xi\in\mathbb{F}_{q^2}$ and  $\eta=a+\xi\mu^{\sigma}b$ where $a\in\mathbb{F}_{q^2}$ and $b\in\fq^*$.
\end{theorem}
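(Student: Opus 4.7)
The plan is to pin down the scalar $\xi$ via the following uniqueness property: inside both $S$ and $T$, the subspace $\mathbb{F}_{q^2}$ is the \emph{unique} $\fq$-subspace of $\fq$-dimension $2$ that is closed under multiplication by $\mathbb{F}_{q^2}$. This is essentially a concrete reading of the invariant $w_2(S)=w_2(T)=1$ from Theorem~\ref{thm:class3dim}, but needs to be pinned down explicitly. To prove it, I would suppose $\xi'\mathbb{F}_{q^2}\subseteq S=\mathbb{F}_{q^2}+\mu\fq$ with $\xi'\neq 0$, and write $\xi'=c+d\mu$ uniquely with $c\in\mathbb{F}_{q^2}$ and $d\in\fq$. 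Demanding $\xi'\alpha\in S$ for every $\alpha\in\mathbb{F}_{q^2}$ then forces $d\alpha\in\fq$ for all $\alpha\in\mathbb{F}_{q^2}$; since $d\mathbb{F}_{q^2}$ has $\fq$-dimension $2$ when $d\neq 0$, this is only possible if $d=0$, so $\xi'\in\mathbb{F}_{q^2}$ and hence $\xi'\mathbb{F}_{q^2}=\mathbb{F}_{q^2}$.

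For the sufficiency direction, assume $\xi\in\mathbb{F}_{q^2}^*$ and $\eta=a+\xi\mu^{\sigma}b$ with $a\in\mathbb{F}_{q^2}$ and $b\in\fq^*$. Using that $\sigma\in\mathrm{Aut}(\fqn)$ stabilizes the unique subfield $\mathbb{F}_{q^2}$ setwise and fixes $\fq$ pointwise, one gets $S^{\sigma}=\mathbb{F}_{q^2}+\mu^{\sigma}\fq$, and a direct computation yields
\[ \xi S^{\sigma}=\xi\mathbb{F}_{q^2}+\xi\mu^{\sigma}\fq=\mathbb{F}_{q^2}+\xi\mu^{\sigma}\fq=\mathbb{F}_{q^2}+\eta\fq=T. \]

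For necessity, assume $T=\xi S^{\sigma}$. Applying the uniqueness property to $S^{\sigma}$, its distinguished $\mathbb{F}_{q^2}$-invariant $2$-dimensional $\fq$-subspace is $\mathbb{F}_{q^2}$; multiplication by $\xi$ must send it to the corresponding subspace of $T$, which is again $\mathbb{F}_{q^2}$. Hence $\xi\mathbb{F}_{q^2}=\mathbb{F}_{q^2}$, i.e.\ $\xi\in\mathbb{F}_{q^2}$. The equality $T=\xi S^{\sigma}$ now reads $\mathbb{F}_{q^2}+\eta\fq=\mathbb{F}_{q^2}+\xi\mu^{\sigma}\fq$; since $\eta$ lies in the right-hand side, I can write $\eta=a+\xi\mu^{\sigma}b$ with $a\in\mathbb{F}_{q^2}$ and $b\in\fq$, and the hypothesis $\eta\notin\mathbb{F}_{q^2}$ forces $b\in\fq^*$. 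The only nontrivial step is the uniqueness of the $\mathbb{F}_{q^2}$-invariant $\fq$-hyperplane inside $S$, which is a short dimension count; I do not expect any substantial obstacle beyond that.
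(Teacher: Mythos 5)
Your proposal is correct and takes essentially the same approach as the paper: the paper obtains $\xi\in\mathbb{F}_{q^2}$ by observing that $\mathbb{F}_{q^2}+\xi\mathbb{F}_{q^2}\subseteq T$ forces $\xi\mathbb{F}_{q^2}=\mathbb{F}_{q^2}$ (two distinct multiplicative cosets of $\mathbb{F}_{q^2}$ would span $\fq$-dimension $4>3=\dim_{\fq}(T)$), which is precisely your uniqueness claim with a dimension count in place of your coordinate computation, and the sufficiency direction is identical. One cosmetic remark: a general $\sigma\in\mathrm{Aut}(\fqn)$ fixes only $\mathbb{F}_p$ pointwise, not $\fq$ (when $q$ is not prime), but your argument only uses that $\sigma$ stabilizes $\fq$ and $\mathbb{F}_{q^2}$ setwise, so nothing breaks.
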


In terms of codes we obtain the following result, as a corollary of the results of this section.

\begin{corollary}
\label{cor:orbitclass}
Let $C$ be a one-orbit cyclic subspace code with dimension three in $\fqn$. Then $C$ is equivalent to $\mathrm{Orb}(S)$ where $S$ satisfies one of the following conditions
\begin{itemize}
    \item[I)] $S=\F_{q^3}$, $d(C)=6$;
    \item [II)]$\dim_{\fq}(S^2)=4$, $S=\la 1,\lmb,\lmb^2\ra_{\fq}$ for some $\lmb \in \F_{q^4}\setminus \F_{q^2}$, $d(C)=2$, $\delta_4(S)=1$ and $w_2(S)=1$;
    \item [III)]$\dim_{\fq}(S^2)=5$, $S=\la 1,\lmb,\lmb^2\ra_{\fq}$ for some $\lmb \in \F_{q^n}\setminus \F_{q^4}$, $d(C)=2$, if $n$ is even $\delta_2(S)=3$ and $w_2(S)=0$;
    \item [IV)]$\dim_{\fq}(S^2)=5$, $S=\F_{q^2}+\la \mu\ra_{\fq}$ for some $\mu \in \F_{q^n}\setminus \F_{q^4}$, $d(C)=2$, $\delta_2(S)=2$ and $w_2(S)=1$;
    \item [V)]$S$ is a Sidon space, $d(C)=4$.
\end{itemize}
\end{corollary}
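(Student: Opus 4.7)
The plan is to derive Corollary \ref{cor:orbitclass} as a direct synthesis of Theorem \ref{thm:class3dim}, Proposition \ref{prop:allCase2equivalent}, and the elementary facts on the minimum distance of one-orbit cyclic subspace codes recalled at the beginning of the introduction, rather than as an independent technical result.

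First, I would start from an arbitrary three-dimensional one-orbit code $\mathrm{Orb}(S)$ and invoke Theorem \ref{thm:class3dim} to split into the four cases $\dim_{\fq}(S^2)\in\{3,4,5,6\}$. In each of these cases a single rescaling by an element of $\fqn^*$, which is an instance of the linear equivalence of Definition \ref{def1} with $\sigma=\mathrm{id}$, suffices to pass to a canonical representative of the form required in (I)--(V). Concretely, in Case 1) we have $S=\mu\F_{q^3}$ and multiplying by $\mu^{-1}$ yields (I). In Case 2) I would first apply Proposition \ref{prop:allCase2equivalent} to replace any representative of Case 2.2) by one of Case 2.1) type, and then divide by the leading $\mu$ to reach $S=\la 1,\lmb,\lmb^2\ra_{\fq}$ with $\lmb\in\F_{q^4}\setminus\F_{q^2}$, which is (II). Case 3.1) gives (III) and Case 3.2) gives (IV) after analogous rescalings, and Case 4) is exactly (V). Proposition \ref{prop:Fqt} ensures that the invariants $\delta_t$ and $w_t$ already recorded in Theorem \ref{thm:class3dim} for each subcase are preserved under these rescalings; in particular, the value $w_2(S)=1$ claimed in (II) is inherited from the Case 2.2) representative, where it is visible, rather than from the polynomial-basis form of Case 2.1).

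Next, I would verify the minimum-distance claims. In (I), $\mathrm{Orb}(\F_{q^3})$ is the Desarguesian $3$-spread of $\fqn$ (well defined since $3\mid n$), so distinct codewords intersect trivially and $d=6$. In (II)--(IV) one has $\dim_{\fq}(S^2)<6$, so $S$ is not a Sidon space by Theorem \ref{lowerboundSidon}, and hence there exists $\alpha\in\fqn^*$ with $\dim_{\fq}(S\cap\alpha S)\geq 2$; since $S$ is strictly $\fq$-linear in these cases the intersection is proper, yielding $d=2$. In (V), the Sidon property forces $\dim_{\fq}(S\cap\alpha S)\leq 1$ for every $\alpha\in\fqn^*\setminus\fq^*$, so $d\geq 4$, and the general bound $d\leq 2k-2=4$ for strictly $\fq$-linear one-orbit codes recalled in the introduction gives equality.

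The hard part, to the extent there is one, is conceptual rather than technical: one must recognise that the invariant $w_2(S)=1$ in (II) cannot be read off directly from the polynomial-basis form of Case 2.1) and is instead imported through the equivalence of Proposition \ref{prop:allCase2equivalent}. Once this is pointed out, every remaining step is a bookkeeping exercise within the classification of Theorem \ref{thm:class3dim}.
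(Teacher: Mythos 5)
Your proposal is correct and is essentially the proof the paper intends: the corollary is stated there as a direct consequence of Theorem \ref{thm:class3dim}, Proposition \ref{prop:allCase2equivalent} and the invariance results of Section \ref{sec:equivandinv}, combined with the standard distance facts (spread for Case I, non-Sidon giving an $\alpha$ with $\dim_{\fq}(S\cap\alpha S)=2$ for Cases II--IV, and the Sidon property with the $2k-2$ bound for Case V), which is exactly the synthesis you carry out. Your observation that $w_2(S)=1$ in (II) is imported via the equivalence of Proposition \ref{prop:allCase2equivalent} together with Proposition \ref{prop:deltaewtinvariants} is also the route consistent with the paper (though it can alternatively be checked directly, since a $3$-dimensional $\fq$-subspace of $\F_{q^4}$ must contain a full coset $\xi\F_{q^2}$).
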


\begin{remark}
By using Corollary \ref{cor:orbitclass} together with Lemma \ref{lem:hyperplanes} and Theorems \ref{thm:class3dim}, \ref{thm:polcaseequiv} and \ref{thm:inequiv123}, we can get information about the number of inequivalent codes. The
subspaces of Family II) of Corollary \ref{cor:orbitclass}, up to equivalence, generate only one orbit. Whereas, when $n$ is odd, it can be proved that the number $t$ of inequivalent orbits of those having minimum distance $2$ is bounded as follows
\[
\frac{q^{n-1}-1}{nh(q^2-1)}\leq t \leq \frac{q^{n-1}-1}{q^2-1},
\]
where $q=p^h$, $p$ prime, $h\in\mathbb{N}$.
When $h=1$, $n$ is a prime number such that $n>p+1$, then $t$ reaches the above lower bound. For small values of $q$ and $n$ this number has been computed also in \cite{Heideequiv,Heideequiv2}. 
In the next section, we will investigate the equivalence issue for subspaces of Family V).
\end{remark}

\section{Optimum-distance codes}\label{sec:optdistcodes}

In this section we will deal with three-dimensional one-orbit cyclic subspace codes of size $\frac{q^n-1}{q-1}$ having minimum distance $4$, under the assumption that  $\delta_3(S)=2$. 
We will restrict our study to this family, since these are rare objects with respect to those having $\delta_3(S)=3$. 
We start by proving that these subspaces/codes admit an interesting polynomial description via linearized polynomials. Recall that a \textbf{$q$-polynomial}/\textbf{linearized polynomial} over $\fqn$ is a polynomial of the form
$\sum_{i=0}^t a_ix^{q^i} \in \fqn[x]$, 
for some $t \in \mathbb{N}_0$, and denote by $\mathcal{L}_{n,q}$ the set of $q$-polynomials over $\fqn$.

In the following we prove that a subspace $S$ with $\delta_3(S)=2$ can be represented via a linearized polynomial.

\begin{proposition}\label{prop:1polstep}
Let $S$ be an $\fq$-subspace of $\fqn$ of dimension $3$ over $\fq$, with $n=3s$ and $s\geqslant 2$, such that $\delta_3(S)=2$. Then there exists $\xi\in\fqn^*$ such that $S\cap\xi\mathbb{F}_{q^3}=\lbrace 0\rbrace$ and $\xi\mathbb{F}_{q^3}\subseteq\langle S\rangle_{\mathbb{F}_{q^3}}$. 
\end{proposition}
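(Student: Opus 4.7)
The plan is a short counting argument inside $W := \langle S \rangle_{\F_{q^3}}$, which by the assumption $\delta_3(S) = 2$ is a two-dimensional $\F_{q^3}$-subspace of $\fqn$. The goal is to exhibit a $1$-dimensional $\F_{q^3}$-subspace of $W$ (a ``line'' of $W$) that meets $S$ trivially. Because $W$ contains exactly $q^3 + 1$ such lines and they partition $W \setminus \lbrace 0 \rbrace$, it is enough to show that at least one of them avoids $S \setminus \lbrace 0 \rbrace$.

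First I would observe that for every line $L \subseteq W$ the intersection $L \cap S$ is an $\fq$-subspace of $S$, so its $\fq$-dimension $d_L$ lies in $\lbrace 0,1,2,3 \rbrace$. The case $d_L = 3$ would force $S \subseteq L$ and therefore $\langle S \rangle_{\F_{q^3}} \subseteq L$, contradicting $\delta_3(S) = 2$; hence $d_L \leq 2$ for every line. Partitioning $S \setminus \lbrace 0 \rbrace$ according to the line of $W$ containing each nonzero vector then yields
\[
\sum_{L} (q^{d_L} - 1) = |S| - 1 = q^3 - 1,
\]
where $L$ ranges over the $q^3 + 1$ lines of $W$.

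The finishing move is a proof by contradiction: assume every line meets $S$ nontrivially, so $d_L \geq 1$ for all $L$. Then the left-hand side of the displayed identity is at least $(q^3+1)(q-1)$, and a direct check (which reduces to $q^2 - q - 1 > 0$) shows that this strictly exceeds $q^3 - 1$ for every prime power $q \geq 2$. The contradiction produces a line $L_0 = \xi \F_{q^3}$ with $\xi \in W^*$ satisfying $L_0 \cap S = \lbrace 0 \rbrace$, and by construction $\xi \F_{q^3} \subseteq W = \langle S \rangle_{\F_{q^3}}$, giving the desired element $\xi$.

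I do not anticipate any genuine obstacle here; the argument rests only on the projective structure of lines in a $2$-dimensional $\F_{q^3}$-space, and the single analytic step is the elementary inequality $(q^3+1)(q-1) > q^3 - 1$ used at the end. The hypothesis $s \geq 2$ is used tacitly to ensure that $W$ is a proper subspace of $\fqn$, while the partition identity itself needs no further restriction on $n$.
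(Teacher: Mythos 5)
Your proof is correct and follows essentially the same route as the paper: both argue by contradiction, using the fact that the $q^3+1$ one-dimensional $\F_{q^3}$-subspaces of $\langle S\rangle_{\F_{q^3}}$ (which is $2$-dimensional over $\F_{q^3}$ since $\delta_3(S)=2$) partition its nonzero vectors, and comparing with $|S\setminus\{0\}|=q^3-1$. The only inessential difference is that you count at least $q-1$ points of $S$ on each line, whereas the paper simply counts at least one point per line to reach the same contradiction $q^3-1\geq q^3+1$.
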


\begin{remark}
\label{rmk:Sdisjointlmbfq3}
By the previous proposition, if $\delta_3(S)=2$ then there exist $\xi,\rho\in\fqn$ such that $\xi\mathbb{F}_{q^3}\cap S=\lbrace 0\rbrace$, $\xi\mathbb{F}_{q^3}\subseteq \langle S\rangle_{\F_{q^3}}$ and $\xi\mathbb{F}_{q^3}+\rho\mathbb{F}_{q^3}=\langle S\rangle_{\mathbb{F}_{q^3}}$. This means that, if $\delta_3(S)=2$, without loss of generality, then we may assume that $S\subseteq \xi\mathbb{F}_{q^3}+\rho\mathbb{F}_{q^3}$ with $\xi,\rho\in\fqn$ such that $S\cap \xi\mathbb{F}_{q^3}=\lbrace 0\rbrace$ and $\frac{\lmb}{\rho}\notin\F_{q^3}$.
\end{remark}

\begin{proposition}
\label{prop:SdefinesFlinearmap} 
Let $S$ be an $\fq$-subspace of $\fqn$, with $n=3s$ and $s\geqslant 2$, such that $\dim_{\fq}(S)=3$ and $ \la S\ra_{\F_{q^3}}=\lmb\mathbb{F}_{q^3}+\rho\mathbb{F}_{q^3}$ with $\frac{\lmb}{\rho}\notin\F_{q^3}$ and $S\cap\lmb \mathbb{F}_{q^3}=\lbrace 0\rbrace$. Then there exists a $q$-polynomial $f\in\mathcal{L}_{3,q}$ such that
$S=\lbrace \rho u+\lmb f(u):u\in\mathbb{F}_{q^3}\rbrace$.
\end{proposition}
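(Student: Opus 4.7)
The plan is to exhibit $f$ explicitly as the $\fq$-linear map recording the ``$\lmb$-coordinate'' of an element of $S$ once we write elements of $\langle S\rangle_{\F_{q^3}}$ with respect to the $\F_{q^3}$-basis $\{\rho,\lmb\}$, and then invoke the standard correspondence between $\fq$-linear endomorphisms of $\mathbb{F}_{q^3}$ and $q$-polynomials in $\mathcal{L}_{3,q}$.

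First I would observe that because $\lmb/\rho\notin\mathbb{F}_{q^3}$ the set $\{\rho,\lmb\}$ is $\mathbb{F}_{q^3}$-linearly independent, so the sum is direct:
\[
\langle S\rangle_{\F_{q^3}}=\rho\mathbb{F}_{q^3}\oplus\lmb\mathbb{F}_{q^3}.
\]
Consequently every $x\in\langle S\rangle_{\F_{q^3}}$ can be written uniquely as $x=\rho u+\lmb v$ with $u,v\in\mathbb{F}_{q^3}$, which defines two $\fq$-linear projections $\pi_1,\pi_2\colon\langle S\rangle_{\F_{q^3}}\to\mathbb{F}_{q^3}$, given by $\pi_1(x)=u$ and $\pi_2(x)=v$.

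Next, I would restrict $\pi_1$ to $S$. Its kernel is exactly $S\cap\lmb\mathbb{F}_{q^3}=\{0\}$ by hypothesis, so $\pi_1|_S$ is injective; since $\dim_{\fq}(S)=3=\dim_{\fq}(\mathbb{F}_{q^3})$, the map $\pi_1|_S\colon S\to\mathbb{F}_{q^3}$ is an $\fq$-linear isomorphism. I would then set
\[
f:=\pi_2\circ(\pi_1|_S)^{-1}\colon\mathbb{F}_{q^3}\to\mathbb{F}_{q^3},
\]
which is $\fq$-linear by construction. By definition, for every $u\in\mathbb{F}_{q^3}$ the element $\rho u+\lmb f(u)$ lies in $S$, so $\{\rho u+\lmb f(u)\colon u\in\mathbb{F}_{q^3}\}\subseteq S$; equality then follows from a dimension count, both spaces having $\fq$-dimension $3$.

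It remains to represent $f$ as a $q$-polynomial. Every $\fq$-linear endomorphism of $\mathbb{F}_{q^3}$ is uniquely represented by a $q$-polynomial $\sum_{i=0}^{2}a_i x^{q^i}\in\mathbb{F}_{q^3}[x]$ via Lagrange-type interpolation, so $f\in\mathcal{L}_{3,q}$ and the desired description of $S$ follows. There is no real obstacle here: the argument is essentially a direct-sum decomposition combined with the standard linearized-polynomial representation, and the hypotheses $S\cap\lmb\mathbb{F}_{q^3}=\{0\}$ and $\lmb/\rho\notin\mathbb{F}_{q^3}$ are used exactly to ensure injectivity of $\pi_1|_S$ and directness of the decomposition.
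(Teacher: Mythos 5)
Your proof is correct and follows essentially the same route as the paper's: both exploit that the projection of $\langle S\rangle_{\F_{q^3}}=\rho\mathbb{F}_{q^3}\oplus\lmb\mathbb{F}_{q^3}$ onto the $\rho$-component has kernel $\lmb\mathbb{F}_{q^3}$, so by $S\cap\lmb\mathbb{F}_{q^3}=\{0\}$ it is injective (hence bijective) on $S$, yielding an $\fq$-linear map $f$ with $S=\{\rho u+\lmb f(u)\colon u\in\mathbb{F}_{q^3}\}$, identified with an element of $\mathcal{L}_{3,q}$ via the standard correspondence between $\fq$-linear endomorphisms of $\F_{q^3}$ and $q$-polynomials. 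If anything, your version is slightly more explicit than the paper's about the surjectivity/dimension-count step, which the paper leaves implicit.
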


By the above proposition we have that
\[
\rho^{-1}S=\lbrace u+\rho^{-1}\lmb f(u):u\in\mathbb{F}_{q^3}\rbrace.
\]
Therefore, up to multiplication by a scalar in $\fqn$, we have that we may represent an $\fq$-subspace $S$ of $\fqn$, with $n=3s$ and $s\geqslant 2$, such that $\dim _{\fq}(S)=3$ and $\delta_3(S)=2$ by a linearized polynomial over $\mathbb{F}_{q^3}$ defined as in Proposition \ref{prop:SdefinesFlinearmap}, i.e.
\[
S=\lbrace u+\gamma f(u):u\in\mathbb{F}_{q^3}\rbrace=V_{f,\xi}.
\]

So, up to equivalence, we may assume that the subspaces we want to study are of the form $V_{f,\gamma}\subseteq\fqn$.
The equivalence among subspaces of the form $V_{f,\gamma}$ has been studied in \cite{CPSZSidon}, in a more general setting.

\begin{theorem}\cite[Theorem 6.2]{CPSZSidon}\label{thm:equiv}
Let $k$ and $n$ be two positive integers such that $k \mid n$.
Let $U,W$ be two $m$-dimensional $\fq$-subspaces of $\mathbb{F}_{q^k}^2$. Consider 
\[ V_{U,\gamma}=\lbrace u+u'\gamma: (u,u')\in U\rbrace\]
and
\[V_{W,\xi}=\lbrace w+w'\xi: (w,w')\in W\rbrace,\]
where $\gamma,\xi\in\fqn$ are such that $\lbrace 1,\gamma\rbrace$ and $\lbrace 1,\xi\rbrace$ are $\fqk$-linearly independent and $\delta_k(V_{U,\gamma})=\delta_k(V_{W,\xi})=2$. Then $V_{U,\gamma}$ and $V_{W,\xi}$ are equivalent under the action of $(\lmb,\sigma)\in\fqn^*\rtimes \mathrm{Aut}(\fqn)$ if and only if there exists $A=\left(\begin{aligned}
    \begin{matrix}
        c & d \\
        a & b 
    \end{matrix}
\end{aligned}\right) \in \mathrm{GL}(2,\fqk)$ such that $\xi=\frac{a+b\gamma^\sigma}{c+d\gamma^\sigma}$, $\lmb=\frac{1}{c+d\gamma^\sigma}$ and $U^{\sigma}=\{w A \colon w \in W\}=W \cdot A$.
\end{theorem}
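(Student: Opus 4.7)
The plan is to prove both directions of the equivalence, exploiting that $\delta_k(V_{U,\gamma}) = \delta_k(V_{W,\xi}) = 2$ forces $\langle V_{U,\gamma}\rangle_{\fqk} = \fqk + \fqk\gamma$ and $\langle V_{W,\xi}\rangle_{\fqk} = \fqk + \fqk\xi$. Before starting, I would unpack the equivalence relation: the shape of the formulas, with $\gamma^{\sigma}$ rather than $\xi^{\sigma}$, indicates that the pair $(\lambda,\sigma)$ is to be read through the identity $V_{W,\xi} = \lambda V_{U,\gamma}^{\sigma}$.

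For the forward direction, Proposition \ref{prop:Fqt} applied to this equivalence yields
\[
\fqk + \fqk\xi = \lambda(\fqk + \fqk\gamma^{\sigma}),
\]
using that $\sigma$ restricts to an automorphism of $\fqk$. Writing $1$ and $\xi$ as elements of the right-hand side gives $a,b,c,d\in\fqk$ with $1 = \lambda(c+d\gamma^{\sigma})$ and $\xi = \lambda(a+b\gamma^{\sigma})$, so $\lambda = (c+d\gamma^{\sigma})^{-1}$ and $\xi = (a+b\gamma^{\sigma})/(c+d\gamma^{\sigma})$. To produce $A$, pick an arbitrary $(w,w')\in W$ and use the equivalence to find $(u,u')\in U$ with $w+w'\xi = \lambda(u^{\sigma}+u'^{\sigma}\gamma^{\sigma})$. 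Multiplying by $c+d\gamma^{\sigma}$ and using $\lambda(c+d\gamma^{\sigma})=1$ gives
\[
(cw+aw') + (dw+bw')\gamma^{\sigma} = u^{\sigma} + u'^{\sigma}\gamma^{\sigma},
\]
and $\fqk$-linear independence of $\{1,\gamma^{\sigma}\}$ (inherited from that of $\{1,\gamma\}$) forces $(u^{\sigma},u'^{\sigma}) = (w,w')A$ with $A = \begin{pmatrix}c & d \\ a & b\end{pmatrix}$. Since the bijections $W\to V_{W,\xi}$ and $U^{\sigma}\to V_{U,\gamma}^{\sigma}$ intertwine with $V_{W,\xi} = \lambda V_{U,\gamma}^{\sigma}$, the map $(w,w')\mapsto(w,w')A$ is a bijection $W\to U^{\sigma}$, so $U^{\sigma} = W\cdot A$. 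Finally, $A\in\mathrm{GL}(2,\fqk)$ follows because $\delta_k(V_{U,\gamma}^{\sigma})=2$ implies $\langle U^{\sigma}\rangle_{\fqk} = \fqk^{2}$, so the $\fqk$-linear endomorphism $v\mapsto vA$ of $\fqk^{2}$ is surjective.

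The reverse direction is a direct verification: given $A\in\mathrm{GL}(2,\fqk)$ and the prescribed formulas for $\xi$ and $\lambda$, for every $(w,w')\in W$ let $(u,u')\in U$ be the unique preimage of $(w,w')A\in U^{\sigma}$ under $(u,u')\mapsto(u^{\sigma},u'^{\sigma})$; the identities $\lambda(c+d\gamma^{\sigma})=1$ and $\lambda(a+b\gamma^{\sigma})=\xi$ then give
\[
\lambda(u^{\sigma}+u'^{\sigma}\gamma^{\sigma}) = \lambda\bigl((c+d\gamma^{\sigma})w + (a+b\gamma^{\sigma})w'\bigr) = w+w'\xi,
\]
proving $V_{W,\xi}\subseteq \lambda V_{U,\gamma}^{\sigma}$; invertibility of $A$ yields the reverse inclusion.

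The main subtlety throughout is bookkeeping — the direction in which the matrix acts and the asymmetric way $\sigma$ enters (acting on $\gamma$ rather than on $\xi$); once the two $\fqk$-spans have been identified, everything collapses to solving a $2\times 2$ linear system in $\{1,\gamma^{\sigma}\}$-coordinates over $\fqk$.
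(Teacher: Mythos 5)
This theorem is not proved in the paper at all: it is quoted from \cite[Theorem 6.2]{CPSZSidon}, so there is no internal proof to compare against, and your attempt has to be judged on its own. Your argument is correct: reading the pair $(\lmb,\sigma)$ through $V_{W,\xi}=\lmb V_{U,\gamma}^{\sigma}$ (the only reading compatible with the occurrence of $\gamma^{\sigma}$ in the formulas, and the one consistent with how the theorem is applied in Theorem \ref{th:classxqTr}), the hypothesis $\delta_k(V_{U,\gamma})=\delta_k(V_{W,\xi})=2$ identifies the two $\fqk$-spans as $\fqk+\fqk\xi=\lmb(\fqk+\fqk\gamma^{\sigma})$ via Proposition \ref{prop:Fqt}, expanding in the $\fqk$-basis $\{1,\gamma^{\sigma}\}$ produces the entries of $A$ together with $U^{\sigma}=W\cdot A$, invertibility of $A$ is correctly forced by $\langle U^{\sigma}\rangle_{\fqk}=\fqk^{2}$, and the converse is the direct verification you give. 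This is essentially the standard coordinate argument of the cited source, so there is nothing to flag beyond the orientation of the action and of the matrix multiplication, which you handle correctly.
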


Clearly, if $U_f=\{ (u,f(u)) \colon u \in \F_{q^k} \}$, which is an $\fq$-subspace of $\fqk^2$, then $V_{f,\gamma}=V_{U_f,\gamma}$.

We are now ready to provide a classification result for three dimensional subspaces $S$ of $\fqn$ such that $\delta_3(S)=2$. We recall that $\mathrm{Tr}_{\F_{q^k}/\fq}(x)=x+x^q+x^{q^2}+\dots+x^{q^{k-1}}$ for $x\in\fqk$.

\begin{theorem}\label{th:classxqTr}
Let $n$ be a multiple of $3$ and let $S$ be an $\fq$-subspace of $\fqn$ with dimension $3$ and $\delta_3(S)=2$. Then there exists $\gamma \in \fqn\setminus \F_{q^3}$ such that $S$ is equivalent  either to $V_{x^q,\gamma}$ or to
$ V_{\mathrm{Tr}_{\F_{q^3}/\fq}(x),\gamma}$.
\end{theorem}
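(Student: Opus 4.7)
The plan is to combine Propositions~\ref{prop:1polstep} and~\ref{prop:SdefinesFlinearmap} with the equivalence criterion of Theorem~\ref{thm:equiv}, converting the statement into the problem of classifying the $\fq$-subspaces $U_f = \{(u,f(u)) : u\in\F_{q^3}\} \subseteq \F_{q^3}^2$ (with $f\in\mathcal{L}_{3,q}$) up to the action of $\mathrm{GL}(2,\F_{q^3})\times\mathrm{Aut}(\fqn)$. By Proposition~\ref{prop:SdefinesFlinearmap}, after a scalar multiplication, I can assume $S = V_{f,\gamma}$, and by Theorem~\ref{thm:equiv} the equivalence class of $V_{f,\gamma}$ is determined by the $\mathrm{GL}(2,\F_{q^3})\times\mathrm{Aut}(\fqn)$-orbit of $U_f$; so the whole task reduces to showing that the orbit of any such $U_f$ contains either $U_{x^q}$ or $U_{\mathrm{Tr}_{\F_{q^3}/\fq}(x)}$.

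The key invariant I would exploit is the intersection profile $\nu(\ell) = \dim_{\fq}(U_f\cap\ell)$ for $\ell\in\PG^1(\F_{q^3})$. Since $\delta_3(S) = 2$, no line has $\nu(\ell) = 3$, and at most one line can have $\nu(\ell) = 2$: two such lines would produce $2$-dimensional $\fq$-subspaces of the $3$-dimensional $U_f$ that must meet in a nonzero vector lying in the intersection of two distinct lines, namely $\{0\}$, a contradiction. The counting identity $\sum_\ell q^{\nu(\ell)} = 2q^3$ then organises the remaining possibilities, giving the dichotomy that drives the proof.

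In Case~A some line $\ell_0$ has $\nu(\ell_0) = 2$: using the transitivity of $\mathrm{GL}(2,\F_{q^3})$ on $\PG^1(\F_{q^3})$ I move $\ell_0$ to $\F_{q^3}\cdot(1,0)$, so that the new $f$ has $2$-dimensional kernel, hence $\fq$-rank $1$; writing $f(x) = \mathrm{Tr}_{\F_{q^3}/\fq}(\alpha x)\,\theta$ for some $\alpha,\theta\in\F_{q^3}^*$, a diagonal matrix in $\mathrm{GL}(2,\F_{q^3})$ identifies $U_f$ with $U_{\mathrm{Tr}_{\F_{q^3}/\fq}(x)}\cdot A$, giving $S \sim V_{\mathrm{Tr}_{\F_{q^3}/\fq}(x),\gamma'}$ for an appropriate $\gamma'\in\fqn\setminus\F_{q^3}$. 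In Case~B we have $\nu(\ell)\le 1$ for all $\ell$, with $q^2+q+1$ lines of $\nu = 1$ and $q^3-q^2-q$ lines of $\nu = 0$ by the same counting; moving a $\nu = 0$ line to $\F_{q^3}\cdot(1,0)$ makes $f$ invertible. The main obstacle is then to show that every invertible $f\in\mathcal{L}_{3,q}$ is equivalent to $x^q$ under $\mathrm{GL}(2,\F_{q^3})\times\mathrm{Aut}(\fqn)$. The plan for this step is to normalise $f$ successively by an additive M\"obius shift $f\mapsto f-a_0 I$ (eliminating the constant term), a left scaling $f\mapsto \lambda f$, the swap matrix (which realises $f\mapsto f^{-1}$ and in particular exchanges the monomials $x^q$ and $x^{q^2}$), and a suitable Frobenius $\sigma\in\mathrm{Aut}(\fqn)$ to align the remaining coefficient; the most delicate sub-case is that in which both Frobenius coefficients $a_1,a_2$ are nonzero, where the normalisation requires combining a compositional inversion with a further carefully chosen M\"obius transformation. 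Controlling this generic invertible sub-case is the main technical difficulty of the proof.
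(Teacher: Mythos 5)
Your overall reduction is the same as the paper's: by Propositions \ref{prop:1polstep} and \ref{prop:SdefinesFlinearmap} you write $S$, up to a scalar, as $V_{f,\gamma}$, and by Theorem \ref{thm:equiv} the problem becomes the classification of the graphs $U_f\subseteq \F_{q^3}^2$ up to $\mathrm{\Gamma L}(2,q^3)$-equivalence. The paper settles this last step by citing the known classification of rank-three $\fq$-linear sets of $\PG(1,q^3)$ (\cite{LavVdV2}, \cite{FSz}, see also \cite{CsMPclass}): $U_f$ is equivalent either to $U_{x^q}$ (scattered case) or to $U_{\mathrm{Tr}_{\F_{q^3}/\fq}(x)}$ (club case). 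You instead propose to reprove this classification, and your Case A does go through: at most one line has weight two, a rank-one $\fq$-linear map has the form $x\mapsto \mathrm{Tr}_{\F_{q^3}/\fq}(\alpha x)\theta$, and a diagonal matrix identifies $U_f$ with $U_{\mathrm{Tr}_{\F_{q^3}/\fq}(x)}$; you only omit the (easily fixed) requirement that the chosen matrix also send a weight-zero line to $\langle (0,1)\rangle_{\F_{q^3}}$ so that the transformed subspace is again a graph.

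The genuine gap is Case B. The statement you reduce to, namely that every invertible $f\in\mathcal{L}_{3,q}$ is equivalent to $x^q$, is false: invertibility of $f$ does not force $U_f$ to be scattered. For example, applying a suitable element of $\mathrm{GL}(2,q^3)$ to $U_{\mathrm{Tr}_{\F_{q^3}/\fq}(x)}$ (sending the unique weight-two line away from $\langle(1,0)\rangle_{\F_{q^3}}$ and $\langle(0,1)\rangle_{\F_{q^3}}$, and weight-zero lines onto both of them) produces a graph $U_g$ with $g$ invertible which is still a club, hence not equivalent to $U_{x^q}$ by the weight-distribution invariant (Proposition \ref{prop:weightinvariace}); such $g$ also yield subspaces $V_{g,\gamma}$ with $\delta_3=2$, so the issue is not excluded by the hypotheses of the theorem. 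What you actually need is that every \emph{scattered} $U_f$ is $\mathrm{\Gamma L}(2,q^3)$-equivalent to $U_{x^q}$ --- precisely the theorem of Lavrauw and Van de Voorde that the paper cites --- and your normalisation sketch never uses scatteredness, so it cannot prove it. Concretely, after killing the $x$-coefficient and scaling you are left with $f=x^q+ax^{q^2}$ with $a\neq 0$, and deciding which $a$ give scattered subspaces and showing that all of them, together with $x^q$, lie in a single orbit under the full M\"obius action and Frobenius is exactly the content you defer to ``a further carefully chosen M\"obius transformation''. As it stands that step is an acknowledgement of the difficulty rather than a proof; you should either carry out this computation (necessarily invoking scatteredness) or cite \cite{LavVdV2} as the paper does.
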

\begin{proof}
Up to equivalence, because of the above results, we can assume that
\[ S=V_{f,\gamma}, \]
for some $f \in \mathcal{L}_{3,q}$ and $\gamma \in \fqn\setminus \F_{q^3}$. Let
$U_f\subseteq \F_{q^3}^2$.
By \cite{LavVdV2} and \cite{FSz} (see also \cite{CsMPclass}), $U_f$ is $\mathrm{\Gamma L}(2,q^3)$-equivalent either to $U_{x^q}$ or to $U_{\mathrm{Tr}_{\F_{q^3}/\fq}(x)}$, i.e. there exist $A=\begin{pmatrix} c&d \\ a & b\end{pmatrix} \in \mathrm{GL}(2,q^3)$ and $\sigma \in \mathrm{Aut}(\F_{q^3})$ such that
$U_f^\sigma =U_{x^q}A \,\,\,\text{or}\,\,\, U_f^\sigma =U_{\mathrm{Tr}_{\F_{q^3}/\fq}(x)}A$. 
By Theorem \ref{thm:equiv} $S$ is equivalent either to $V_{x^q,\gamma'}$ or to $V_{\mathrm{Tr}_{\F_{q^3}/\fq}(x),\gamma'}$, where $\gamma'=\frac{a+b\gamma^\sigma}{c+d\gamma^\sigma}$. 
\end{proof}

We can characterize the values of $\gamma$ for which $V_{x^q,\gamma}$ and $V_{\mathrm{Tr}_{\F_{q^3}/\fq}(x),\gamma}$ are Sidon spaces.
The former one has been already characterized in \cite[Theorems 12 and 16]{Roth} and \cite[Theorem 4.5]{CPSZSidon}, which in our case reads as follows.

\begin{theorem}
Let $\gamma\in\mathbb{F}_{q^{n}}\setminus\F_{q^3}$. 
If $n>6$ then $V_{x^{q},\gamma}$ is a Sidon space for any $\gamma \in \fqn \setminus \F_{q^3}$.
If $n=6$, then $V_{x^{q},\gamma}$ is a Sidon space if and only if $\mathrm{N}_{\F_{q^6}/\fq}(\gamma)=\gamma^{\frac{q^6-1}{q-1}}\neq 1$. 
\end{theorem}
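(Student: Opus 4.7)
The plan is to expand the Sidon defining equation $ab=cd$ in a basis involving $\gamma$ and analyze when it forces coset equality, following the strategy of \cite{Roth} (Theorems 12 and 16) and \cite{CPSZSidon} (Theorem 4.5), which can also be invoked directly. Writing $a=u+\gamma u^q$, $b=v+\gamma v^q$, $c=x+\gamma x^q$, $d=y+\gamma y^q$ with $u,v,x,y\in\F_{q^3}^*$, a direct computation unfolds $ab=cd$ as
\[
\alpha_0 + \gamma\,\alpha_1 + \gamma^2\,\alpha_0^q = 0,
\]
where $\alpha_0 = uv-xy$ and $\alpha_1 = uv^q+u^qv-xy^q-x^qy$ lie in $\F_{q^3}$.

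For the first assertion ($n>6$), the key point is that $\{1,\gamma,\gamma^2\}$ is $\F_{q^3}$-linearly independent for the relevant $\gamma$, since $[\F_{q^3}(\gamma):\F_{q^3}]\geq 3$. This forces $\alpha_0=\alpha_1=0$. Setting $\lambda = u/x = y/v \in \F_{q^3}^*$ and substituting into $\alpha_1=0$ yields $(\lambda^q-\lambda)(x^qv - xv^q) = 0$, so either $\lambda \in \fq^*$ or $x/v \in \fq^*$; either case gives $\{u\fq,v\fq\} = \{x\fq,y\fq\}$, hence $\{a\fq,b\fq\}=\{c\fq,d\fq\}$, and $V_{x^q,\gamma}$ is a Sidon space.

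For the second assertion ($n=6$), $\gamma \in \F_{q^6}\setminus \F_{q^3}$ satisfies $\gamma^2 = A+B\gamma$ with $A=-\gamma^{q^3+1}$ and $B=\gamma+\gamma^{q^3}$ in $\F_{q^3}$. Using $\F_{q^3}$-linear independence of $\{1,\gamma\}$, the defining equation splits into $\alpha_0 + A\alpha_0^q = 0$ together with $\alpha_1+B\alpha_0^q = 0$. If $\alpha_0 = 0$, the previous argument concludes; otherwise $\alpha_0^{q-1} = -1/A$ is solvable in $\F_{q^3}^*$ iff $\mathrm{N}_{\F_{q^3}/\fq}(-1/A)=1$. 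A direct norm computation using $(q^3+1)(q^2+q+1)=(q^6-1)/(q-1)$ converts this condition to $\mathrm{N}_{\F_{q^6}/\fq}(\gamma)=1$.

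The main obstacle is the converse in the $n=6$ case: under $\mathrm{N}_{\F_{q^6}/\fq}(\gamma)=1$, one must realize an actual Sidon failure by exhibiting $u,v,x,y \in \F_{q^3}^*$ with $\{u\fq,v\fq\}\neq\{x\fq,y\fq\}$ satisfying the system. The ansatz $x=u$, $v-y=\alpha_0/u$ reduces the companion equation $\alpha_1+B\alpha_0^q=0$ to the quadratic $Aw^2-Bw-1=0$ in $w=u^{q-1}$ over $\F_{q^3}$; since the product of its roots equals $-1/A$ which is a $(q-1)$-th power by hypothesis, one verifies that at least one root lies in the image of $u\mapsto u^{q-1}$, producing valid $u\in\F_{q^3}^*$ and the required counterexample.
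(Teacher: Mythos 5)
Your algebraic setup is sound: the expansion of $ab=cd$ into $\alpha_0+\gamma\alpha_1+\gamma^2\alpha_0^q=0$, the deduction of $\{a\fq,b\fq\}=\{c\fq,d\fq\}$ from $\alpha_0=\alpha_1=0$, and, for $n=6$, the computation showing that $\alpha_0\neq 0$ forces $\mathrm{N}_{\F_{q^6}/\fq}(\gamma)=1$ are all correct. Note that the paper itself gives no proof of this statement: it is quoted from \cite[Theorems 12 and 16]{Roth} and \cite[Theorem 4.5]{CPSZSidon}, and only the companion trace statement (Theorem \ref{thm:crucialcharactSidon}) is proved there, by computing $\dim_{\fq}(V^2)$. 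Two caveats on your first case: the inference ``$n>6$ implies $[\F_{q^3}(\gamma):\F_{q^3}]\geq 3$'' fails when $6\mid n$ and $\gamma\in\F_{q^6}\setminus\F_{q^3}$; such $\gamma$ must be funnelled into your degree-two analysis (and the literal statement needs this reading).

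The genuine gap is the converse for $n=6$. Your ansatz $x=u$ means $c=a$, and then $ab=cd$ forces $b=d$ in the field $\F_{q^6}$, so no Sidon violation can ever arise from it. This is visible in your own reduction: the roots of $Aw^2-Bw-1=0$ are $-\gamma^{-1}$ and $-\gamma^{-q^3}$ (substitute and use $\gamma^2=A+B\gamma$), which lie in $\F_{q^6}\setminus\F_{q^3}$, so no $u\in\F_{q^3}^*$ has $u^{q-1}$ equal to a root. Moreover, even for a quadratic with roots in $\F_{q^3}$, ``the product of the roots is a $(q-1)$-st power'' does not imply that one root is a $(q-1)$-st power, since two nontrivial classes modulo the subgroup of $(q-1)$-st powers can multiply to the trivial class. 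Hence the implication ``$\mathrm{N}_{\F_{q^6}/\fq}(\gamma)=1\Rightarrow V_{x^q,\gamma}$ is not a Sidon space'' is not established. A clean repair, parallel to the paper's proof of Theorem \ref{thm:crucialcharactSidon}: since $\dim_{\fq}(V)=3$, $V$ is Sidon if and only if $\dim_{\fq}(V^2)=6$; using $\gamma^2=A+B\gamma$, every product equals $uv+A(uv)^q+\gamma\bigl(uv^q+u^qv+B(uv)^q\bigr)$, the pairs $\bigl(uv,\,uv^q+u^qv\bigr)$ span all of $\F_{q^3}^2$ over $\fq$, and the $\fq$-linear map $(z,w)\mapsto z+Az^q+\gamma(w+Bz^q)$ has nontrivial kernel exactly when $z^{q-1}=-1/A$ is solvable, i.e. exactly when $\mathrm{N}_{\F_{q^6}/\fq}(\gamma)=1$, in which case $\dim_{\fq}(V^2)=5$. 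Alternatively, invoke \cite[Theorem 12]{Roth} directly, as the paper does.
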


For the case of the trace function we have the following characterization.

\begin{theorem}\label{thm:crucialcharactSidon}
Let $\gamma\in\mathbb{F}_{q^{n}}\setminus\F_{q^3}$. If $n>6$ then $V_{\mathrm{Tr}_{\F_{q^3}/\fq}(x),\gamma}$ is a Sidon space for any $\gamma \in \fqn \setminus \F_{q^3}$.
If $n=6$, then $V_{\mathrm{Tr}_{\F_{q^3}/\fq}(x),\gamma}$ is a Sidon space if and only if $\mathrm{Tr}_{\F_{q^6}/\fq}(\gamma)\ne -2$. 
\end{theorem}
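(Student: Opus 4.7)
The plan is to compute $\dim_{\fq}(V^2)$ for $V:=V_{\mathrm{Tr}_{\F_{q^3}/\fq}(x),\gamma}$ and invoke \cite[Lemma 20]{Roth}, by which a $3$-dimensional $\fq$-subspace is a Sidon space if and only if its square span has the maximum possible $\fq$-dimension $6$. To make the calculation tractable I would first exploit the structure of $V$: since $\mathrm{Tr}_{\F_{q^3}/\fq}(u)=0$ forces $v(u)=u$, the kernel $K:=\ker(\mathrm{Tr}_{\F_{q^3}/\fq})$ is contained in $V$. Fixing an $\fq$-basis $\{u_1,u_2\}$ of $K$ and any $u_3\in\F_{q^3}$ with $t:=\mathrm{Tr}_{\F_{q^3}/\fq}(u_3)\neq 0$ (possible by surjectivity of the trace), the set $\{u_1,u_2,\,u_3+t\gamma\}$ is an $\fq$-basis of $V$, and $V^2$ is spanned over $\fq$ by the six products $u_1^2,\ u_2^2,\ u_1 u_2,\ u_1(u_3+t\gamma),\ u_2(u_3+t\gamma),\ (u_3+t\gamma)^2$.

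The analysis then splits according to the degree of $\gamma$ over $\F_{q^3}$. If $\gamma\notin\F_{q^6}$ (forced whenever $6\nmid n$), then $\{1,\gamma,\gamma^2\}$ is $\F_{q^3}$-linearly independent and I would project the six generators onto the three summands of $\F_{q^3}\oplus\gamma\F_{q^3}\oplus\gamma^2\fq$. The first three generators span $\F_{q^3}\oplus\{0\}\oplus\{0\}$, because $\{u_1^2,u_1 u_2,u_2^2\}$ is an $\fq$-basis of $\F_{q^3}$ (since $u_1/u_2\notin\fq$). Modulo this summand, the remaining three residues are $\fq$-independent: the $\gamma$-coefficients $tu_1,tu_2,2tu_3$ are $\fq$-independent because $\{u_1,u_2,u_3\}$ is an $\fq$-basis of $\F_{q^3}$, while the $\gamma^2$-coefficient $t^2$ of $(u_3+t\gamma)^2$ is nonzero. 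Hence $\dim_{\fq}(V^2)=6$ and $V$ is Sidon.

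If instead $\gamma\in\F_{q^6}\setminus\F_{q^3}$, I would write $\gamma^2=\alpha+\beta\gamma$ with $\alpha,\beta\in\F_{q^3}$. Now $V^2\subseteq\F_{q^3}+\gamma\F_{q^3}$, which has $\fq$-dimension $6$, so the question becomes $\fq$-independence of the six generators. Extracting the $\gamma$-components of a putative dependence $\sum a_i g_i=0$ yields a single equation in $\F_{q^3}$, namely $a_4 u_1+a_5 u_2+a_6(2u_3+t\beta)=0$. Expanding $2u_3+t\beta$ in the $\fq$-basis $\{u_3,u_1,u_2\}$ of $\F_{q^3}$ as $c_3 u_3+c_1 u_1+c_2 u_2$, this admits a nontrivial solution if and only if $c_3=0$; when $c_3\neq 0$ one forces $a_4=a_5=a_6=0$, and then the constant components, read in the basis $\{u_1^2,u_1 u_2,u_2^2\}$, force $a_1=a_2=a_3=0$. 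Applying $\mathrm{Tr}_{\F_{q^3}/\fq}$ (which annihilates $u_1,u_2$ and sends $u_3\mapsto t$) to $2u_3+t\beta$ gives $c_3 t=2t+t\,\mathrm{Tr}_{\F_{q^3}/\fq}(\beta)$, so $c_3=2+\mathrm{Tr}_{\F_{q^3}/\fq}(\beta)$. Finally, since the minimal polynomial of $\gamma$ over $\F_{q^3}$ is $x^2-\beta x-\alpha$, we have $\beta=\mathrm{Tr}_{\F_{q^6}/\F_{q^3}}(\gamma)$, and transitivity of the trace gives $\mathrm{Tr}_{\F_{q^3}/\fq}(\beta)=\mathrm{Tr}_{\F_{q^6}/\fq}(\gamma)$. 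Thus $\dim_{\fq}(V^2)=6$ if and only if $\mathrm{Tr}_{\F_{q^6}/\fq}(\gamma)\neq -2$, yielding the $n=6$ characterization.

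The main obstacle will be the bookkeeping in the second case, namely keeping separate the two distinct $\fq$-bases of $\F_{q^3}$ that come into play (the ``product'' basis $\{u_1^2,u_1 u_2,u_2^2\}$ that controls the constant part and the ``linear'' basis $\{u_3,u_1,u_2\}$ that controls the $\gamma$-part) and correctly isolating the scalar invariant whose vanishing governs the $\fq$-rank. A minor subtlety arises in characteristic $3$, where the natural choice $u_3=1$ fails because $\mathrm{Tr}(1)=0$; but surjectivity of $\mathrm{Tr}_{\F_{q^3}/\fq}$ ensures an admissible $u_3$ exists in every characteristic, so the argument goes through uniformly.
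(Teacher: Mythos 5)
Your argument is correct in substance, and for $n=6$ it is essentially the paper's own proof: the same basis $\{u_1,u_2,u_3+t\gamma\}$ with $u_1,u_2$ spanning $\ker(\mathrm{Tr}_{\F_{q^3}/\fq})$, the same observation that $u_1^2,u_1u_2,u_2^2$ span $\F_{q^3}$ over $\fq$ (because $u_1/u_2\in\F_{q^3}\setminus\fq$ cannot satisfy a quadratic over $\fq$), the reduction $\gamma^2=A+B\gamma$ with $B=\mathrm{Tr}_{\F_{q^6}/\F_{q^3}}(\gamma)$, and the criterion $\mathrm{Tr}_{\F_{q^3}/\fq}(2u_3+B)\neq 0$, i.e. $\mathrm{Tr}_{\F_{q^6}/\fq}(\gamma)\neq -2$ (the paper normalizes $t=1$, which as you note requires avoiding $u_3=1$ in characteristic $3$). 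The criterion ``Sidon $\Leftrightarrow \dim_{\fq}(V^2)=6$'' is exactly Case 4 of Theorem \ref{thm:class3dim} (one direction is \cite[Lemma 20]{Roth}, the other is Theorem \ref{lowerboundSidon}; Lemma 20 alone gives only one implication). Where you genuinely diverge is the large-field case: the paper disposes of $n>6$ by observing that $V_{\mathrm{Tr}_{\F_{q^3}/\fq}(x),\gamma}$ is equivalent to $V_{x^q+x^{q^2},\gamma}$ and citing \cite[Proposition 4.8]{CPSZSidon}, whereas you give a short self-contained rank computation using the $\F_{q^3}$-independence of $1,\gamma,\gamma^2$. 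That is a clean alternative which keeps the whole theorem inside one computation.

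Two caveats. First, a repairable slip: in the first case you assert that the $\gamma$-coefficients $tu_1,tu_2,2tu_3$ are $\fq$-independent, which is false in characteristic $2$ (there $2tu_3=0$). The independence of the three residues still holds, but the order of the argument matters: use the $\gamma^2$-coefficient $t^2\neq 0$ to force $a_6=0$ first, and only then conclude $a_4=a_5=0$ from the independence of $u_1,u_2$. Second, your case split is governed by the degree of $\gamma$ over $\F_{q^3}$, not by $n$, so when $n>6$ with $6\mid n$ and $\gamma\in\F_{q^6}\setminus\F_{q^3}$ you land in your second computation, which yields the trace criterion rather than ``always Sidon''. This is not an error on your side: since the Sidon property does not depend on the ambient field, your computation shows that for such $\gamma$ with $\mathrm{Tr}_{\F_{q^6}/\fq}(\gamma)=-2$ the space $V_{\mathrm{Tr}_{\F_{q^3}/\fq}(x),\gamma}\subseteq\F_{q^6}\subseteq\fqn$ has $\dim_{\fq}(V^2)=5$ and hence is not Sidon, so the blanket claim for $n>6$ implicitly assumes $1,\gamma,\gamma^2$ to be $\F_{q^3}$-independent (automatic when $6\nmid n$), which is the setting of the cited \cite[Proposition 4.8]{CPSZSidon}. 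You should make that hypothesis explicit (or state the conclusion in that subcase via the trace condition); with that, your proof is complete and arguably more transparent than an appeal to the external proposition.
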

\begin{proof}
Note that $V_{\mathrm{Tr}_{\F_{q^3}/\fq}(x),\gamma}$ is equivalent to $V_{x^q+x^{q^2},\gamma}$,
via the matrix $\begin{pmatrix} 1 & -1\\ 0 & 1 \end{pmatrix}$.
If $n>6$ the assertion follows by \cite[Proposition 4.8]{CPSZSidon}.
So, assume that $n=6$ and let $V=V_{\mathrm{Tr}_{\F_{q^3}/\fq}(x),\gamma}$.
Since $\dim_{\fq}(V)=3$ and $\dim_{\fq}(\ker(\mathrm{Tr}_{\F_{q^3}/\fq}(x)))=2$ we have that 
\[V=\la u_1, u_2, u_3+\gamma\ra_{\fq},\]
for some $u_1,u_2$ and $u_3$ in $\F_{q^3}$ such that $\mathrm{Tr}_{\F_{q^3}/\fq}(u_1)=\mathrm{Tr}_{\F_{q^3}/\fq}(u_2)=0$ and $\mathrm{Tr}_{\F_{q^3}/\fq}(u_3)=1$.\\
Let consider 
\[
V^2=\la u_1^2, u_1u_2, u_1u_3+u_1\gamma, u_2^2, u_2u_3+u_2\gamma, u_3^2+2u_3\gamma+\gamma^2\ra_{\fq}.
\]
Let observe that $u_1^2,u_1u_2,u_2^2$ are $\fq$-linearly independent because otherwise there would exist $\alpha,\beta\in\fq$ such that 
\[
\frac{u_1^2}{u_2^2}=\alpha+\beta\frac{u_1}{u_2},
\]
i.e. $\frac{u_1}{u_2}$ is a root of a polynomial of degree 2 over $\fq$, and this is not possible since $\frac{u_1}{u_2}\in\mathbb{F}_{q^3}\setminus \fq$. Hence $
\mathbb{F}_{q^3}=\la u_1^2,u_1u_2,u_2^2\ra_{\fq}$
and then
\[
V^2=\F_{q^3}+\gamma\langle u_1,u_2,2u_3+\gamma\rangle_{\fq}.
\]
Also $\dim_{\fq}(\mathbb{F}_{q^3}+\la u_1\gamma\ra_{\fq})=4$
since $\gamma\in\mathbb{F}_{q^6}\setminus\mathbb{F}_{q^3}$. Moreover $\dim_{\fq}(\mathbb{F}_{q^3}+\la u_1\gamma,u_2\gamma\ra_{\fq})=5$.
Indeed, if there exist $\alpha\in\fq$ and $\beta\in\mathbb{F}_{q^3}$ such that
$u_2\gamma=\alpha u_1\gamma+\beta$
then
$(u_2-\alpha u_1)\gamma=\beta$
and since $u_2-\alpha u_1$ cannot be zero (because $u_1,u_2$ are $\fq$-linearly independent) this would imply that $\gamma\in\mathbb{F}_{q^3}$, a contradiction. Finally, since $\gamma^2=A+B\gamma$, where $A,B\in\mathbb{F}_{q^3}$ and $B=\mathrm{Tr}_{\F_{q^6}/\F_{q^3}}(\gamma)$ and $A=-\mathrm{N}_{\F_{q^6}/\F_{q^3}}(\gamma)$, we have
\[
2u_3\gamma+\gamma^2=(2u_3+B)\gamma+A
\]
and hence $V^2=\mathbb{F}_{q^3}+\gamma\la u_1,u_2,2u_3+B\ra_{\fq}$.
Note that $\dim_{\fq}(\ker(\mathrm{Tr}_{\F_{q^3}/\fq}(x)))=2$ and $\ker(\mathrm{Tr}_{\F_{q^3}/\fq}(x))=\la u_1,u_2\ra_{\fq}$ then $\dim_{\fq}(V^2)=6$ if and only if $2u_3+B\notin \ker(\mathrm{Tr}_{\F_{q^3}/\fq}(x))$, i.e. $
\mathrm{Tr}_{\F_{q^3}/\fq}(2u_3+B)=2\mathrm{Tr}_{\F_{q^3}/\fq}(u_3)+\mathrm{Tr}_{\F_{q^3}/\fq}(B)=2+\mathrm{Tr}_{\F_{q^3}/\fq}(B)\neq 0,$
which happens if and only if $\mathrm{Tr}_{\F_{q^3}/\fq}(B)=\mathrm{Tr}_{\F_{q^3}/\fq}(\mathrm{Tr}_{\F_{q^6}/\F_{q^3}}(\gamma))=\mathrm{Tr}_{\F_{q^6}/\fq}(\gamma)\neq -2$.
\end{proof}

For more details on the subspace associated with the trace function see \cite{Castello}. 
 
The above result also gives a classification of three dimensional Sidon spaces $S$ with $\delta_3(S)=2$.

\begin{corollary}
Let $n$ be a multiple of three and let $S$ be a Sidon space in $\fqn$ with dimension $3$ and $\delta_3(S)=2$. 
If $n>6$ then $S$ is equivalent to one of the following:
\begin{itemize}
    \item $S_{x^q,\gamma}$;
    \item $S_{\mathrm{Tr}_{\F_{q^3}/\fq}(x),\gamma}$,
\end{itemize}
for some $\gamma \in \fqn\setminus \F_{q^3}$.
If $n=6$ then $S$ is equivalent to one of the following:
\begin{itemize}
    \item $S_{x^q,\gamma}$, for some $\gamma \in \F_{q^6}$ such that $\mathrm{N}_{\F_{q^6}/\fq}(\gamma)\ne 1$;
    \item $S_{\mathrm{Tr}_{\F_{q^3}/\fq}(x),\gamma}$, for some $\gamma \in \fqn$ such that $\mathrm{Tr}_{\F_{q^6}/\fq}(\gamma)\ne -2$.
\end{itemize}
\end{corollary}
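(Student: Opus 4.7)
The plan is to present this corollary as a direct assembly of the classification theorem for subspaces with $\delta_3(S)=2$ (Theorem \ref{th:classxqTr}) and the two Sidon-space characterizations just above (the result quoted from \cite{Roth} and \cite{CPSZSidon} for $V_{x^q,\gamma}$, and Theorem \ref{thm:crucialcharactSidon} for the trace case). The only preliminary to check is that the Sidon property is a semilinear invariant, so that one may replace $S$ by a representative in the chosen normal form without losing anything.

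First, I would invoke Theorem \ref{th:classxqTr}: since $\dim_{\fq}(S)=3$ and $\delta_3(S)=2$, there exists $\gamma\in\fqn\setminus\F_{q^3}$ such that $S$ is semilinearly equivalent either to $V_{x^q,\gamma}$ or to $V_{\mathrm{Tr}_{\F_{q^3}/\fq}(x),\gamma}$. Next I would note that the Sidon property is preserved under the equivalence of Definition \ref{def1}: if $S_1=\alpha S_2^{\sigma}$ and $a,b,c,d\in S_1\setminus\{0\}$ satisfy $ab=cd$, writing each element as $\alpha\sigma(x)$ with $x\in S_2$ and using that $\sigma$ is a field automorphism and $\alpha\ne 0$, the equation $ab=cd$ pulls back to a corresponding multiplicative relation in $S_2$; applying the Sidon property of $S_2$ and pushing the conclusion forward through $\sigma$ and multiplication by $\alpha$ (both of which permute $\fq$-lines) gives $\{a\fq,b\fq\}=\{c\fq,d\fq\}$. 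Hence $S$ is Sidon if and only if the chosen representative $V_{\cdot,\gamma}$ is.

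Finally, I would apply the two characterizations to translate the Sidon condition into a condition on $\gamma$. For $n>6$, both quoted theorems guarantee that $V_{x^q,\gamma}$ and $V_{\mathrm{Tr}_{\F_{q^3}/\fq}(x),\gamma}$ are Sidon for every $\gamma\in\fqn\setminus\F_{q^3}$, so no extra restriction arises. For $n=6$, the same results impose precisely $\mathrm{N}_{\F_{q^6}/\fq}(\gamma)\ne 1$ in the monomial case and $\mathrm{Tr}_{\F_{q^6}/\fq}(\gamma)\ne -2$ in the trace case, producing the two bullets of the statement.

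There is no real obstacle in this argument; every nontrivial piece has already been established earlier in the paper. The only step that is not a verbatim citation is the semilinear invariance of being a Sidon space, and this is a short direct check from the definition, so it can be dispatched in one or two lines inside the proof.
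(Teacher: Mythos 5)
Your proposal is correct and follows exactly the route the paper intends: the corollary is stated as an immediate consequence of Theorem \ref{th:classxqTr} combined with the two Sidon characterizations for $V_{x^q,\gamma}$ and $V_{\mathrm{Tr}_{\F_{q^3}/\fq}(x),\gamma}$, and your one-line verification that the Sidon property is preserved under the action of $(\alpha,\sigma)\in\fqn^*\rtimes\mathrm{Aut}(\fqn)$ (since $\sigma$ fixes $\fq$ and multiplication by $\alpha$ permutes $\fq$-lines) is precisely the only glue needed. Nothing is missing.
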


We can now use Proposition \ref{prop:weightinvariace} to show that the two families found in Theorem \ref{th:classxqTr}, up to equivalence, are distinct.

\begin{corollary}\label{cor:finequiv}
Let $n$ be a multiple of $3$ and let $\gamma,\xi \in \fqn\setminus \F_{q^3}$. Then $V_{x^q,\gamma}$ and $V_{\mathrm{Tr}_{\F_{q^3}/\fq}(x),\xi}$ are inequivalent.
\end{corollary}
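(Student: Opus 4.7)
The plan is to exhibit an invariant under the semilinear equivalence of Definition~\ref{def1} that takes different values on the two families. In the spirit of Proposition~\ref{prop:weightinvariace}, the natural candidate is
\[
\omega(S) := \max_{a \in \fqn^*} \dim_{\fq}\bigl(S \cap a\mathbb{F}_{q^3}\bigr).
\]
This quantity is preserved by any pair $(\lambda,\sigma) \in \fqn^* \rtimes \mathrm{Aut}(\fqn)$: since $3 \mid n$, every $\sigma \in \mathrm{Aut}(\fqn)$ fixes $\mathbb{F}_{q^3}$ setwise, so the $\fq$-linear bijection $x \mapsto \lambda x^{\sigma}$ restricts to an $\fq$-isomorphism $S \cap a'\mathbb{F}_{q^3} \to \lambda S^{\sigma} \cap a\mathbb{F}_{q^3}$ for $a' = (\lambda^{-1}a)^{\sigma^{-1}}$. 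As $a$ ranges over $\fqn^*$ so does $a'$, and the two maxima therefore coincide.

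The next step is to compute $\omega$ on the two families. For $S = V_{\mathrm{Tr}_{\F_{q^3}/\fq}(x),\xi}$, the kernel $\ker\mathrm{Tr}_{\F_{q^3}/\fq}$ is a $2$-dimensional $\fq$-subspace of $\mathbb{F}_{q^3}$ that is sent to itself by the defining parametrization $u \mapsto u + \xi\,\mathrm{Tr}_{\F_{q^3}/\fq}(u)$, and hence lies entirely in $S \cap \mathbb{F}_{q^3}$. Thus $\omega(V_{\mathrm{Tr},\xi}) \geqslant 2$. For $S = V_{x^q,\gamma}$, I would show $\omega(V_{x^q,\gamma}) = 1$ by a short calculation: if two nonzero elements $y_i = u_i + \gamma u_i^q$ of $S$ lie in a common coset $a\mathbb{F}_{q^3}$, then $c := y_1/y_2 \in \mathbb{F}_{q^3}$, and the $\mathbb{F}_{q^3}$-linear independence of $\{1,\gamma\}$ forces $u_1 = cu_2$ and $u_1^q = cu_2^q$; combining these yields $c^q = c$, so $c \in \fq$ and $y_1,y_2$ are $\fq$-proportional.

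Comparing the two values gives $\omega(V_{x^q,\gamma}) = 1 < 2 \leqslant \omega(V_{\mathrm{Tr},\xi})$, so the two codes are inequivalent. The main obstacle is the intersection bound for $V_{x^q,\gamma}$, which rests on the bijectivity of $x \mapsto x^q$ on $\mathbb{F}_{q^3}$ and the $\mathbb{F}_{q^3}$-linear independence of $\{1,\gamma\}$; both are available from the hypothesis $\gamma \in \fqn \setminus \mathbb{F}_{q^3}$.
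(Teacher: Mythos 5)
Your proof is correct. You use the same underlying invariant as the paper --- the maximal $\fq$-dimension of the intersection of the subspace with a multiplicative coset $a\mathbb{F}_{q^3}$, i.e.\ the maximal weight of the associated linear set --- and the same two computations (the trace subspace contains the $2$-dimensional $\ker\mathrm{Tr}_{\F_{q^3}/\fq}\subseteq \F_{q^3}$, while $V_{x^q,\gamma}$ meets every coset in dimension at most $1$, which is exactly the scatteredness of $U_{x^q}$ used in the paper). The difference is in the machinery: the paper first transfers the equivalence to a $\GammaL(2,q^3)$-equivalence of $U_{x^q}$ and $U_{\mathrm{Tr}_{\F_{q^3}/\fq}(x)}$ via Theorem \ref{thm:equiv}, and then invokes Proposition \ref{prop:weightinvariace}, which records the whole weight distribution $N_i$; you instead prove the invariance of $\omega(S)=\max_a\dim_{\fq}(S\cap a\F_{q^3})$ directly from Definition \ref{def1}, using only that any $\sigma\in\mathrm{Aut}(\fqn)$ fixes $\F_{q^3}$ setwise (it is the unique subfield of order $q^3$ since $3\mid n$) and that multiplication by $\lambda$ permutes the cosets. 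Your route is more self-contained (no appeal to Theorem \ref{thm:equiv}, hence no need to check its hypotheses such as $\delta_3=2$ and the $\F_{q^3}$-independence of $\{1,\gamma\}$ for the equivalence step), while the paper's route buys the finer invariant $N_i$ for all $i$, which is reusable beyond this corollary. One small precision: the map $x\mapsto\lambda x^{\sigma}$ is in general only $\fq$-semilinear (additive, with scalars twisted by $\sigma|_{\fq}$), not $\fq$-linear as you write; this does not affect the argument, since it still carries $\fq$-subspaces to $\fq$-subspaces of the same dimension.
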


In terms of codes we obtain a classification result for one-orbit cyclic subspace codes of dimension three when $3 \mid n$ and $\delta_3(S)=2$.

\begin{corollary}
Let $C=\mathrm{Orb}(S)$ be a one-orbit cyclic subspace code with dimension three in $\fqn$. Suppose that $3 \mid n$ and $\delta_3( S)=2$. 
\begin{itemize}
    \item If $n>6$ then $d(C)=4$ and $C$ is equivalent either to $\mathrm{Orb}(V_{x^q,\gamma})$ or to $\mathrm{Orb}(V_{\mathrm{Tr}_{\F_{q^3}/\fq}(x),\gamma})$, for some $\gamma \in \fqn\setminus\F_{q^3}$.
    \item If $n=6$ and $d(C)=4$ then $C$ is equivalent either to $\mathrm{Orb}(V_{x^q,\gamma})$ with $\N_{\F_{q^3}/\fq}(\gamma)\ne 1$ or to $\mathrm{Orb}(V_{\mathrm{Tr}_{\F_{q^3}/\fq}(x),\gamma})$ with $\mathrm{Tr}_{\F_{q^6}/\fq}(\gamma)\ne -2$.
    \item If $n=6$ and $d(C)=2$ then $C$ is equivalent either to $\mathrm{Orb}(V_{x^q,\gamma})$ with $\N_{\F_{q^3}/\fq}(\gamma)= 1$ or to $\mathrm{Orb}(V_{\mathrm{Tr}_{\F_{q^3}/\fq}(x),\gamma})$ with $\mathrm{Tr}_{\F_{q^6}/\fq}(\gamma)= -2$.
\end{itemize}
\end{corollary}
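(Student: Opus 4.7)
The plan is to assemble this corollary by combining Theorem \ref{th:classxqTr} (which provides two canonical representatives), the two Sidon-characterization theorems (the statement on $V_{x^q,\gamma}$ and Theorem \ref{thm:crucialcharactSidon}), and the link between Sidon spaces and optimum-distance codes recorded in Corollary \ref{cor:orbitclass}.

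First I would apply Theorem \ref{th:classxqTr}: under the standing hypotheses $3\mid n$, $\dim_{\fq}(S)=3$, and $\delta_3(S)=2$, there exists $\gamma\in\fqn\setminus\F_{q^3}$ such that $S$ is semilinearly equivalent either to $V_{x^q,\gamma}$ or to $V_{\mathrm{Tr}_{\F_{q^3}/\fq}(x),\gamma}$. Since any relation $S_1=\alpha S_2^{\sigma}$ immediately yields $\mathrm{Orb}(S_1)=\mathrm{Orb}(S_2^{\sigma})$, this transports to a semilinear equivalence of the associated one-orbit codes, so the classification of $C$ reduces to these two canonical cases.

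If $n>6$, each canonical representative is a Sidon space for every $\gamma\in\fqn\setminus\F_{q^3}$, by the cited theorem on $V_{x^q,\gamma}$ and by Theorem \ref{thm:crucialcharactSidon}; hence $C$ is optimum-distance and $d(C)=4$. If $n=6$, the same two theorems pinpoint the Sidon property: $V_{x^q,\gamma}$ is a Sidon space exactly when $\N_{\F_{q^6}/\fq}(\gamma)\neq 1$, and $V_{\mathrm{Tr}_{\F_{q^3}/\fq}(x),\gamma}$ is a Sidon space exactly when $\Tr_{\F_{q^6}/\fq}(\gamma)\neq -2$. In either Sidon subcase $d(C)=4$. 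When the condition fails, $S$ is not a Sidon space, so $d(C)\neq 4$ and necessarily $d(C)\in\{2,6\}$.

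The only delicate step is excluding $d(C)=6$ in the non-Sidon subcase for $n=6$. This is handled by Corollary \ref{cor:orbitclass}: minimum distance $6$ forces $S$ to be a multiplicative coset $\mu\F_{q^3}$ (Family I), which has $\delta_3(S)=1$, contradicting the hypothesis $\delta_3(S)=2$. Therefore $d(C)=2$, completing the trichotomy. Note that the inequivalence of the two canonical families listed in each regime is guaranteed by Corollary \ref{cor:finequiv}, so the classification is genuinely non-redundant.
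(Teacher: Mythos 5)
Your proposal is correct and follows exactly the route the paper intends: the corollary is stated there as a direct consequence of Theorem \ref{th:classxqTr}, the two Sidon-space characterizations, and the correspondence between Sidon spaces and minimum distance $4$ (with $d(C)=6$ excluded since it forces $S=\mu\F_{q^3}$, i.e. $\delta_3(S)=1$). Your reading of the norm condition as $\N_{\F_{q^6}/\fq}(\gamma)\ne 1$ is the intended one.
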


In \cite[Example 1]{Etzion} (see also \cite[Example 3.14]{Heidenew} and \cite[Example 18]{Troha}) Etzion and Vardy provide an explicit example of optimum-distance code when $n=6$, $k=3$ and $q=2$.
Under this assumption, the construction $V_{x^q,\gamma}$ of Roth, Raviv and Tamo \cite{Roth} does not provide any example of optimum-distance codes. Therefore, by the above corollary, we know that \cite[Example 1]{Etzion} arises from $V_{\mathrm{Tr}_{\F_{q^3}/\fq}(x),\gamma}$, for some $\gamma \in \F_{2^6}\setminus\F_{2^3}$ with $\mathrm{Tr}_{\F_{2^6}/\F_2}(\gamma)=1$.


\section*{Acknowledgements}

The research was supported by the project ``COMBINE'' of the University of Campania ``Luigi Vanvitelli'' and was partially supported by the Italian National Group for Algebraic and Geometric Structures and their Applications (GNSAGA - INdAM).
This research was also supported by Bando Galileo 2024 – G24-216 and by the project ``The combinatorics of minimal codes and security aspects'', Bando Cassini. 

\newpage \clearpage

\newpage \clearpage

\begin{appendices}
\section{Some proofs}

Because of lack of space, here we prove some of the stated results.

\emph{Proof of Theorem \ref{thm:polcaseequiv}}\\
Suppose that $T$ and $S$ are equivalent under the action of $(\xi,\sigma)\in \fqn^*\rtimes \mathrm{Aut}(\F_{q^n})$, then $T=\xi S^{\sigma}$, i.e.
\[
\langle 1,\mu\,\mu^2\rangle_{\fq}=\xi\langle 1,\lmb^{\sigma},\lmb^{2\sigma}\rangle_{\fq}.
\]
Then
\begin{equation}\label{eq:polsystsigmalambda}
\begin{cases}
    \xi=p_0(\mu)\\
    \xi\lmb^{\sigma}=p_1(\mu)\\
    \xi\lmb^{2\sigma}=p_2(\mu)
\end{cases}
\end{equation}
where $p_i(x)\in\fq[x]$ and $\deg_{\fq}(p_i(x))\leqslant 2$ for any $i\in\lbrace 0,1,2\rbrace$. Without loss of generality, up to change $\xi$, we may assume that $\gcd(p_0(x),p_1(x),p_2(x))=1$. From \eqref{eq:polsystsigmalambda} we get
\begin{equation}
\begin{cases}
\label{sist1}
    \lmb^{\sigma}=\frac{p_1(\mu)}{p_0(\mu)}\\
    \lmb^{2\sigma}=\frac{p_2(\mu)}{p_0(\mu)}.
\end{cases}
\end{equation}
This implies that
\begin{equation}
\label{eq:p1mup2mu}
p_1^2(\mu)=p_2(\mu)p_0(\mu),
\end{equation}
and since $\dim_{\fq}(\fq(\mu))\geqslant 5$, Equation \eqref{eq:p1mup2mu} implies the following polynomial identity
\begin{equation}
\label{eq1 p1}
p_1^2(x)=p_2(x)p_0(x).
\end{equation}
Since $\lmb^{\sigma}\notin\fq$, we have that $p_1(x)$ and $p_0(x)$ are not $\fq$-proportional and the same holds for $p_1(x)$ and $p_2(x)$. Therefore, by Equation \eqref{eq1 p1} we have that $p_1(x)$ is reducible over $\fq$, i.e.
\[
p_1(x)=t(x)s(x)
\]
where $t(x),s(x)\in\fq[x]$ and $\deg(t(x))=\deg(s(x))=1$. Then, since $\gcd(p_0(x),p_1(x),p_2(x))=1$, by \eqref{eq1 p1} if $t(x)\mid p_2(x)$ then  $t^2(x)|p_2(x)$ and $s^2(x)|p_0(x)$. Thus
$p_0(x)=\alpha s^2(x)
\text{ and }
p_2(x)=\beta t^2(x)$
where $\alpha,\beta\in\fq$ and $\alpha\beta=1$.\\
Then by \eqref{sist1} it follows that
\[
\lmb^{\sigma}=\frac{t(\mu)s(\mu)}{\alpha s^2(\mu)}=\frac{t(\mu)}{\alpha s(\mu)}=\frac{\alpha_0+\alpha_1\mu}{\beta_0+\beta_1\mu}.
\]
for some $\alpha_0,\alpha_1,\beta_0,\beta_1\in\fq$ with $(\alpha_1,\beta_1)\neq (0,0)$. We obtain the same condition in the case in which $s(x)|p_2(x)$.\\
Conversely, suppose that 
\[
\lmb^{\sigma}=\frac{\alpha_0+\alpha_1\mu}{\beta_0+\beta_1\mu}=\frac{t(\mu)}{s(\mu)}.
\]
for some $\alpha_0,\alpha_1,\beta_0,\beta_1\in\fq$ with $(\alpha_1,\beta_1)\neq(0,0)$. Let $\xi=s^2(\mu)$, then
\begin{equation}
    \begin{aligned}
        \xi S^{\sigma}&=s^2(\mu)\left\langle 1, \frac{t(\mu)}{s(\mu)}, \frac{t^2(\mu)}{s^2(\mu)}\right\rangle_{\fq}\subseteq\langle 1,\mu,\mu^2\rangle_{\fq}=T,
    \end{aligned}
\end{equation}
and, since $\dim_{\fq}(\xi S^{\sigma})=\dim_{\fq}(S)=\dim_{\fq}(T)=3$, we get 
\[
\xi S^{\sigma}=\xi\langle 1,\lmb^{\sigma},\lmb^{2\sigma}\rangle_{\fq}=\langle 1,\mu,\mu^2\rangle_{\fq}=T,
\]
i.e. $S$ and $T$ are equivalent.
\qed

\medskip

\emph{Proof of Theorem \ref{thm:inequiv123}}\\
Suppose that $T$ and $S$ are equivalent under the action of $(\xi,\sigma)$, i.e. $T=\xi S^{\sigma}$. 
Then $\mathbb{F}_{q^2}+\xi\mathbb{F}_{q^2}\subseteq T$ and hence $\mathbb{F}_{q^2}=\xi\mathbb{F}_{q^2}$, i.e. $\xi\in\mathbb{F}_{q^2}$. Moreover we get
\[
T=\mathbb{F}_{q^2}+\langle \eta\rangle_{\fq}=\mathbb{F}_{q^2}+\xi\langle\mu^{\sigma}\rangle_{\fq}
\]
and so
\[
\eta=a+\xi\mu^{\sigma}b
\]
where $a\in\mathbb{F}_{q^2}$ and $b\in\fq$.\\
Conversely, suppose that $\eta=a+\xi\mu^{\sigma}b$ with $a,\xi\in\mathbb{F}_{q^2}$ and $b\in\fq^*$. Then 
\[
T=\mathbb{F}_{q^2}+\langle \eta\rangle_{\fq}= \xi\mathbb{F}_{q^2}+\xi\langle \mu^{\sigma}\rangle_{\fq}=\xi S^{\sigma}.
\]
\qed

\medskip

\emph{Proof of Proposition \ref{prop:1polstep}}\\
Suppose by contradiction that $S\cap\xi\mathbb{F}_{q^3}\neq\lbrace 0\rbrace$ for every $\xi\in\fqn^*$ such that $\xi\mathbb{F}_{q^3}\subseteq\langle S\rangle_{\F_{q^3}}$. Then, since 
\[
\langle S\rangle_{\F_{q^3}}\setminus\lbrace 0\rbrace=\displaystyle\bigcup_{\xi\F_{q^3}\subseteq\langle S\rangle_{\F_{q^3}}}\xi\F_{q^3}^*,
\]
this implies that $S^*=S\setminus\lbrace 0\rbrace$ contains at least as many elements as the multiplicative cosets of $\mathbb{F}_{q^3}$ contained in $\langle S\rangle_{\F_{q^3}}$, i.e. since $\delta_2(S)=2$, we have that
\[
q^3-1=|S^*|\geqslant \frac{q^6-1}{q^3-1}= q^3+1
\]
which gives a contradiction. 
\qed

\medskip

\emph{Proof of Proposition \ref{prop:SdefinesFlinearmap} }\\
Let consider $x,y\in S$ such that $x\neq y$. Then there exist $u,u',v,v'\in\mathbb{F}_{q^3}$ such that 
\[
x=\lmb u+\rho u'\text{ and }y=\lmb v+\rho v'.
\]
Let observe that $u'\neq v'$ because otherwise $x-y=\lmb(u-v)\in S\cap\lmb\mathbb{F}_{q^3}$ and this implies $x=y$, a contradiction. Therefore 
 \[
\begin{aligned}
f\colon \mathbb{F}_{q^3}&\to \mathbb{F}_{q^3}\\
u&\mapsto u': \lmb u+\rho u'\in S
\end{aligned}
    \] 
is well defined. Also, since $S=\lbrace \rho u+\lmb f(u):u\in\mathbb{F}_{q^3}\rbrace$ is an $\fq$-subspace, $f$ is an $\fq$-linear map, i.e. $f \in \mathcal{L}_{3,q}$.
\qed

\medskip

The following is a well-known result in linear sets theory (see e.g. \cite[Proposition 2.2]{NPSZcompl}), but we reformulate it in a more algebraic flavour including also a short proof. This will allow us to prove Corollary \ref{cor:finequiv}.

\begin{proposition}\label{prop:weightinvariace}
Let $V_1=V_{U_1,\gamma}$ and $V_2=V_{U_2,\xi}$ be two $k$-dimensional equivalent $\fq$-subspaces in $\F_{q^n}$. For every $i \in \{0,1,\ldots,k\}$ and $j \in \{1,2\}$, define
\[ N_i(U_j)=|\{ \langle v \rangle_{\fqk} \colon v\in\fqk^2\setminus\lbrace 0\rbrace \text{ and } \dim_{\fq}(U_i\cap \langle v \rangle_{\fqk})=i \}|. \]
Then $N_i(U_1)=N_i(U_2)$ for any $i$.
\end{proposition}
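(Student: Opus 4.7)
The plan is to reduce to Theorem \ref{thm:equiv} and to exhibit a single bijection on $\fqk^2$ that carries $U_1$ to $U_2$, preserves $\fq$-dimensions of subspaces, and permutes the set of $\fqk$-lines. Since $U_j\cap\langle v\rangle_{\fqk}$ is an $\fq$-subspace, such a map will automatically match each $\fqk$-line to one of the same intersection weight, which gives the equality of the counts $N_i$.

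First I would invoke Theorem \ref{thm:equiv}: the equivalence of $V_1=V_{U_1,\gamma}$ and $V_2=V_{U_2,\xi}$ under a pair $(\lmb,\sigma)\in\fqn^*\rtimes\mathrm{Aut}(\fqn)$ produces a matrix $A\in\mathrm{GL}(2,\fqk)$ such that $U_1^\sigma=U_2\cdot A$. Since $k\mid n$, the automorphism $\sigma$ of $\fqn$ restricts to an automorphism of $\fqk$, so the map
\[
\phi\colon \fqk^2\to\fqk^2,\qquad \phi(w)=w^\sigma\cdot A^{-1},
\]
is well-defined.

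The routine step is then to verify four properties of $\phi$: (i) $\phi$ is a bijection (being the composition of the componentwise Frobenius $\sigma$ with the invertible $\fqk$-linear map $A^{-1}$); (ii) $\phi$ is $\sigma$-semilinear over $\fqk$, i.e.\ $\phi(\alpha w)=\alpha^\sigma\phi(w)$ for every $\alpha\in\fqk$, hence $\fq$-linear and dimension-preserving on $\fq$-subspaces of $\fqk^2$; (iii) $\phi$ permutes the $\fqk$-lines of $\fqk^2$, because $\sigma$-semilinear bijections send $\fqk$-subspaces to $\fqk$-subspaces; and (iv) $\phi(U_1)=U_1^\sigma\cdot A^{-1}=(U_2\cdot A)\cdot A^{-1}=U_2$.

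The conclusion is then immediate: for any $\fqk$-line $L\subseteq\fqk^2$, the bijectivity of $\phi$ together with $\phi(U_1)=U_2$ gives $\phi(U_1\cap L)=U_2\cap\phi(L)$, and by (ii) one obtains $\dim_{\fq}(U_1\cap L)=\dim_{\fq}(U_2\cap\phi(L))$. Since $L\mapsto\phi(L)$ is a bijection on the set of $\fqk$-lines, the multisets $\{\dim_{\fq}(U_1\cap L)\}_L$ and $\{\dim_{\fq}(U_2\cap L)\}_L$ coincide, giving $N_i(U_1)=N_i(U_2)$ for every $i$. No step is a real obstacle; the only point requiring a small amount of care is the restriction of $\sigma$ from $\fqn$ to $\fqk$, which is unproblematic precisely because $k\mid n$.
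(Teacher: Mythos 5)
Your proof is correct and follows essentially the same route as the paper: both invoke Theorem \ref{thm:equiv} to get a $\mathrm{\Gamma L}(2,q^k)$-equivalence between $U_1$ and $U_2$ and then use the resulting semilinear bijection of $\fqk^2$ (which permutes the $\fqk$-lines and preserves $\fq$-dimensions of intersections) to match the weight counts. The only difference is cosmetic: you spell out the restriction of $\sigma$ to $\fqk$ and the four properties of the map, which the paper leaves implicit.
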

\begin{proof}
By Theorem \ref{thm:equiv} it follows that $U_1$ and $U_2$ are $\mathrm{\Gamma L}(2,q^k)$-equivalent, then there exist a matrix $A \in \mathrm{GL}(2,q^k)$ and an automorphism $\rho \in \mathrm{Aut}(\fqk)$ such that
\[ U_1^{\rho}A=U_2. \]
Let $v \in \fqk^2 \setminus\{(0,0)\}$.
Then $\dim_{\fq}(U_1 \cap \langle v \rangle_{\fqk})=i$ if and only if $i=\dim_{\fq}(U_1^{\rho}A \cap \langle v^{\rho}A \rangle_{\fqk})=\dim_{\fq}(U_2 \cap \langle v^{\rho}A \rangle_{\fqk})$. This means that $N_i(U_1)=N_i(U_2)$.
\end{proof}

\medskip

\emph{Proof of Corollary  \ref{cor:finequiv}}\\
By contradiction, suppose that $V_{x^q,\gamma}$ and $V_{\mathrm{Tr}_{\F_{q^3}/\fq}(x),\gamma}$ are equivalent, then by Theorem \ref{thm:equiv} the $\fq$-subspaces
\[ U_{x^q}=\{ (u,u^q) \colon u \in \F_{q^3} \} \]
and 
\[ U_{\mathrm{Tr}_{\F_{q^3}/\fq}(x)}=\{ (u,\mathrm{Tr}_{\F_{q^3}/\fq}(u)) \colon u \in \F_{q^3} \} \]
are $\mathrm{\Gamma L}(2,q^3)$-equivalent.
Note that $U_{x^q}$ is scattered, that is for any $v \in \F_{q^3}^2$
\[ \dim_{\fq}(U_{x^q} \cap \langle v \rangle_{\F_{q^3}})\leq 1, \]
whereas 
\[ \dim_{\fq}(U_{\mathrm{Tr}_{\F_{q^3}/\fq}(x)} \cap \langle (1,0) \rangle_{\F_{q^3}})=\dim_{\fq}(\ker(\mathrm{Tr}_{\F_{q^3}/\fq}(x)))=2. \]
Therefore, Proposition \ref{prop:weightinvariace} yields a contradiction.
\qed

\end{appendices}

\end{document}